\newtheorem{theorem}{Theorem}
\newtheorem{lemma}{Lemma}
\newtheorem{corollary}{Corollary}
\newtheorem{remark}{Remark}
\newtheorem*{assumption*}{\assumptionnumber}
\providecommand{\assumptionnumber}{}
\newenvironment{assumption}[1]
{%
	\renewcommand{\assumptionnumber}{A#1}%
	\begin{assumption*}%
		\protected@edef\@currentlabel{A#1}%
	}
	{%
	\end{assumption*}
}
\def\hyph{-\penalty0\hskip0pt\relax}
\newcommand{\tr}{^\mathrm{T}}
\newcommand{\tri}{^{-\mathrm{T}}}
\newcommand{\ntr}{^\mathrm{-T}}
\newcommand{\abs}[1]{\left\lvert#1\right\rvert}
\newcommand{\norm}[1]{\left\lVert#1\right\rVert}
\newcommand{\indnorm}[1]{{\left\vert\kern-0.25ex\left\vert\kern-0.25ex\left\vert #1 
		\right\vert\kern-0.25ex\right\vert\kern-0.25ex\right\vert}}
\newcommand{\defeq}{\doteq}
\newcommand{\confreg}[1][2]{\mathcal{C}_{#1}}
\newcommand{\outappr}[1][2]{\mathcal{O}_{#1}}
\newcommand{\BR}{\mathbb{R}}
\newcommand{\BE}{\mathbb{E}}
\newcommand{\BP}{\mathbb{P}}
\newcommand{\CB}{\mathcal{B}}
\newcommand\fs@spaceruled{\def\@fs@cfont{\bfseries}\let\@fs@capt\floatc@ruled
  \def\@fs@pre{\vspace{0.5\baselineskip}\hrule height.8pt depth0pt \kern3pt}%
  \def\@fs@post{\kern3pt\hrule\vspace{-2mm}\relax}%
  \def\@fs@mid{\kern3pt\hrule\kern2pt}%
  \let\@fs@iftopcapt\iftrue}
\def \newtext{}
\begin{document}

\title{Finite Sample Analysis of Distribution\hyph Free Confidence Ellipsoids for Linear Regression
}

\author{Szabolcs Szentpéteri,\quad Balázs Csan\'ad Csáji
%\author{Szabolcs Szentpéteri,\qquad Balázs Csan\'ad Csáji, %\IEEEmembership{Member, IEEE}
\thanks{This research was supported by the European Union within the framework of the National Laboratory for Autonomous Systems, RRF-2.3.1-21-2022-00002; and by the TKP2021-NKTA-01 grant of the National Research, Development and Innovation Office (NRDIO), Hungary.}
\thanks{Szabolcs Szentpéteri is with the Institute for Computer Science and Control (SZTAKI), Hungarian Research            Network (HUN-REN), 13-17 Kende utca, H-1111, Budapest, Hungary (e-mail: szentpeteri.szabolcs@sztaki.hu). }
\thanks{Balázs Csan\'ad Csáji is with the Institute for Computer Science and Control (SZTAKI),
		Hungarian Research Network (HUN-REN), 13-17 Kende utca, H-1111, Budapest, Hungary, and with Department of Probability Theory and Statistics, Institute of Mathematics, E\"otv\"os Lor\'and University (ELTE),
		1/C Pázmány Péter sétány, H-1117, Budapest, Hungary (e-mail: csaji@sztaki.hu).}}

\markboth{}%
{}

\maketitle

\begin{abstract}
The least squares (LS) estimate is the archetypical solution of linear regression problems. The asymptotic Gaussianity of the scaled LS error is often used to construct approximate confidence ellipsoids around the LS estimate, however, for finite samples these ellipsoids do not come with strict guarantees, unless some strong assumptions are made on the noise distributions. The paper studies the distribution-free Sign-Perturbed Sums (SPS) ellipsoidal outer approximation (EOA) algorithm which can construct non-asymptotically guaranteed confidence ellipsoids under mild assumptions, such as independent and symmetric noise terms. These ellipsoids have the same center and orientation as the classical asymptotic ellipsoids, only their radii are different, which radii can be computed by convex optimization. Here, we establish high probability non-asymptotic upper bounds for the sizes of SPS outer ellipsoids for linear regression problems and show that the volumes of these ellipsoids decrease at the optimal rate. Finally, the difference between our theoretical bounds and the empirical sizes of the regions are investigated experimentally.
\end{abstract}

\begin{IEEEkeywords}
System identification, linear regression models, confidence regions, finite sample properties, sample complexity
\end{IEEEkeywords}

\section{Introduction}
{\newtext \IEEEPARstart{B}{uilding}
mathematical models of unknown systems based on noisy empirical data is a fundamental problem in system identification, signal processing, machine learning and statistics. 
Linear stochastic systems form an important class of models \cite{kailath2000linear, caines2018}, not only because they are better-understood theoretically and can be handled more easily, but they are also widely used in practical applications. Moreover, they can serve as a stepping stone to understanding more complex models, such as nonlinear systems, which could be handled by linearization around a working point \cite{ljung2010}. 

Linear regression is the most fundamental formulation of linear estimation problems. There are several standard methods that provide point estimates for this problem with uncertainty quantification based on limiting distributions.
Typical examples include the {\em least-squares} (LS) estimate with asymptotic 
confidence regions derived from the central limit theorem \cite{Ljung1999}. A considerable disadvantage of these confidence sets is that they are only approximately correct for finite samples. Consequently, their application could be problematic in domains where {\em robustness} is an important factor or strong stability requirements are posed. Furthermore, even in not safety critical applications, the decision making process based on the estimated model could be drastically improved by taking non-asymptotically guaranteed uncertainties into account.

Robust statistical methods in linear estimation are crucial for signal processing applications \cite{zoubir2018robust}. Due to their connections to sparse signal recovery and compressed sensing, robust linear regression methods \cite{Papageorgiou2015, Mitra2013, Liu2018} are actively studied by the signal processing community. 
These methods offer estimation techniques that are more robust than LS, allowing for more effective outlier management.
We would argue that probabilistic uncertainty quantification for the LS estimate in this case could also be a possible direction. Robust estimation and uncertainty quantification is important in other applications, as well, such as direction-of-arrival estimation \cite{Dai2018, Yin2011}, channel estimation \cite{Dai2020} and object tracking \cite{Feldmann2011}. In these works, Bayesian approaches and covariance estimation are used to quantify the uncertainty or to build confidence regions. Non-Bayesian confidence region constructions were also applied for source localization in wireless sensor networks \cite{han2018}, while \cite{Ghaderpour2024} applied least squares wavelet spectrogram (LSWS) to analyze global navigation satellite systems (GNSS) time series data and built confidence level surfaces to quantify the uncertainty.

To overcome the issue of only asymptotically guaranteed confidence regions, a substantial amount of recent research in system identification focus on approaches with {\em non-asymptotic} guarantees \cite{Algo2018}.} One of the most common techniques is to derive {\em Probably Approximately Correct} (PAC) bounds for an estimate, such as LS, using the theory of concentration inequalities, which then induce confidence regions.

{\newtext Several works analyzed (closed-loop) dynamical systems, where the sample complexity of transition matrix estimations were studied for fully observable \cite{simchowitz18a} and also non-observable state space settings \cite{zheng2020non, oymak2021}. Even unstable linear dynamical systems were studied in \cite{faradonbeh2018}.} A signal processing focused approach was investigated in \cite{basu2019lowrank}, where a low-rank and structured sparse high-dimensional vector autoregressive (VAR) system was analyzed, and a new estimation technique was proposed with non-asymptotic probabilistic upper bounds on the estimation error.
In the aforementioned works, there are strong 
assumptions on the noises, namely that they follow some specific distribution, in most cases they are {\em Gaussian}. {\newtext Even the least restrictive results assume subgaussianity \cite{sarkar2021, Jedra2023}, which covers a wide range of possible noises where the tail probabilities decay at least as fast as for the Gaussian distribution.} Another limitation is that the confidence regions heavily rely on specific constants (hyper-parameters), which are coming from the assumptions on the system setting and the distributions of the inputs and the outputs. In practice, these hyper-parameters are usually unknown and only estimates can be used or bounds based on 
domain specific knowledge of experts. For finite impulse response (FIR) systems, PAC bounds for the LS estimate were given in \cite{Djehiche2021}, under a centered {\em subgaussian} noise assumption. Despite the
relaxed noise assumption compared to the Gaussian case, building confidence regions using this results would still rely on the variance proxy of the subgaussian distribution.

It is also possible to work with alternative loss functions, such as the Chebyshev loss (i.e., the $\ell_{\infty}$ case). Assuming uniformly distributed (therefore bounded) noises, PAC bounds can be derived for the Chebyshev estimate, as well \cite{Yi2024}.

{\em Set membership} identification is another class of methods that can provide quality tags for point estimates in the form of bounding ellipsoids \cite{milanese2013bounding}. These approaches usually also consider (closed-loop) dynamical systems, however, some of their variants address FIR models assuming that only quantized \cite{CASINI2012} of even binary \cite{Casini2011} measurements are available. In most cases of set membership identification, {\em bounded} noises are assumed, where the bound is typically assumed to be known {\em a priori}, though some recent works estimate it from the data \cite{lauricella2020set}.

As both set membership identification and statistical learning theory \cite{Tsiamis2023} based confidence region constructions require strong 
assumptions and a priori knowledge about the noises affecting the systems, algorithms that can build confidence regions in a {\em distribution-free} and
{\em data-driven} (hyper-parameter-free) way for any {\em finite sample} size are highly desirable.

Two identification algorithms that satisfy these properties (i.e., non-asymptotic, distribution-free and data-driven) are the LSCR: {\em Leave-out Sign-dominant Correlation Regions} \cite{Campi2005} and the SPS: {\em Sign-Perturbed Sums} \cite{Csaji2015} methods. They also have several distributed signal processing applications, including source localization in wireless sensor networks \cite{han2018}, and distributed evaluation of confidence sets
\cite{Zambianchi2018}. In \cite{han2018} LSCR confidence regions were built to characterize the search space of the source parameters, while \cite{Zambianchi2018} proposed an SPS-based diffusion algorithm to avoid flooding the network.

In this paper, we further study the SPS method, which can construct exact, non-asymptotic confidence regions for 
the true system parameters under the assumption that the noises are independent and symmetric about zero. SPS was originally introduced for general linear systems \cite{Csaji2012b}, but apart from its exact coverage probability, many of its theoretical guarantees are for linear regression models with independent regressors \cite{Csaji2015}. By using instrumental variables, several properties of SPS can be generalized to closed-loop state space models 
\cite{szentpeteri2023}.

{\newtext 
The fundamental construction of SPS is the indicator function, which
is a hypothesis test that checks whether a given parameter is included in the confidence region. Using the indicator function to determine all the points of the confidence set would be computationally demanding, therefore in \cite{Csaji2015} an {\em ellipsoidal outer approximation }(EOA) algorithm was proposed, which gives a compact representation of the SPS region.
An important feature of the constructed ellipsoid is that its center and its shape matrix coincide with that of the confidence ellipsoid constructed using the classical asymptotic theory, only its radius is different. This radius, which ensures finite sample guarantees, can be computed in polynomial time, by solving semidefinite optimization problems. The EOA construction of SPS was later generalized to ARX \cite{volpe2015sign} and closed-loop state space models \cite{szentpeteri2023}, as well.

Theoretical guarantees besides the exact coverage were rigorously proven for the acceptance region of the SPS indicator function, such as uniform strong consistency \cite{Weyer2017}, which means that the SPS regions almost surely shrink around the true parameters, as the sample size increases. However, this result is asymptotic, and the finite sample performance of SPS remained an open question until recently. In our previous work \cite{szentpeteri2025}, we have investigated the {\em sample complexity} of SPS for linear regression, i.e., we have derived non-asymptotic PAC bounds for the volumes of the acceptance regions of the SPS indicator function (hypothesis test). We have also shown that the sizes of the SPS indicator regions decrease at an optimal rate $\mathcal{O}(1/\sqrt{n})$, where $n$ is the sample size.}

{\newtext In this work we provide a finite sample analysis of the EOA of SPS for general linear regression problems with exogenous regressors, e.g., for FIR systems. 
The main contributions are:
\begin{enumerate}
    \item We prove high probability upper bounds on the sizes of SPS EOA confidence ellipsoids. Our analysis builds on some results from the sample complexity analysis of standard SPS regions \cite{szentpeteri2025}, however, we analyze the finite sample properties of a convex semidefinite program, which requires a significantly different approach than deriving concentration inequalities for the indicator function based construction of standard SPS.

    \item We also show that the obtained bounds are ``close'' to the bounds of the original SPS region and that they also decrease at the optimal rate, that is, $\mathcal{O}(1/\sqrt{n})$.

    \item Extensive simulation experiments are presented, as well. We demonstrate the difference between our theoretical bounds and the empirical sizes of the regions, furthermore, we show comparisons with the PAC bounds of \cite{Djehiche2021} and set membership ellipsoids.
\end{enumerate}
}
We  emphasize that although our analysis use distribution and regressor dependent parameters, the SPS EOA does not require these parameters, and it can be applied under very general assumptions,
unlike standard non-asymptotic solutions, e.g., based on PAC bounds or set membership ellipsoids.

{\newtext {\em Notations:} Throughout the paper $n$ stands for the sample size (i.e., the number of observed input-output pairs). For the sake of convenience, we denote the set $\{1,\dots,n\}$ by $[n]$. Given a real matrix $A$, $A\tr$ denotes the transpose, $\text{det}(A)$ the determinant, $A^{-1}$ the inverse, $A^\dagger$ the pseudoinverse of $A$ and $\bar{A}$ denotes $\frac{1}{n}A$.  The eigenvalues of matrix $A$ are denoted by $\lambda_i(A)$, and for symmetric matrices, we use $\lambda_{\text{min}}(\cdot)$ and $\lambda_{\text{max}}(\cdot)$ for the minimum and maximum eigenvalues, respectively. For a symmetric matrix $A$, $A \succ 0$ ($A \succeq 0$) means that it is positive (semi)definite. The Euclidean norm is denoted by $\|\cdot\|$. The notation $\mathbb{I}$ and $I$ stands for the indicator function and the identity matrix, respectively. The natural logarithm is denoted by $\log(\cdot)$. Throughout the paper $\Phi_n$ denotes the regressor matrix and its thin QR-decomposition is $\Phi_n = \Phi_{{\scriptscriptstyle Q},n}\Phi_{{\scriptscriptstyle R},n}$.}

\section{Problem setting}\label{sec:problem_setting}
This section introduces the addressed linear regression problem and presents our main assumptions. Note that the same assumptions are used as for the sample complexity analysis of the SPS-Indicator function (Algorithm \ref{alg:sps_indicator}) \cite{szentpeteri2025}.
\subsection{Data Generation}
Consider the following linear regression problem
\begin{equation}\label{equ:system}
    Y_t \,\defeq\, \varphi_t\tr \theta^* + W_t,
\end{equation}
for $t \in [n]\defeq\{1, \dots, n\}$, where $Y_t$ is the scalar output, $\varphi_t$ is a $d$-dimensional deterministic regressor, $\theta^*$ is the $d$-dimensional (constant) true parameter and $W_t$ is the (random) scalar noise. We are given a sample of size $n$ which consists of $\varphi_1, \dots, \varphi_n$ (input vectors) and $Y_1, \dots, Y_n$ (noisy, scalar outputs).

The following notation will be used throughout the paper:
\begin{align}
    &\Phi_n \defeq \begin{bmatrix}
        \varphi_1\tr \\
        \varphi_2\tr \\
        \vdots\\
        \varphi_n\tr 
    \end{bmatrix},\qquad
    w_n \defeq \begin{bmatrix}
        W_1\\
        W_2\\
        \vdots\\
        W_n
    \end{bmatrix},\qquad
    y_n \defeq \begin{bmatrix}
        Y_1\\
        Y_2\\
        \vdots\\
        Y_n
    \end{bmatrix},\\
    &	R_n \defeq \sum_{t=1}^{n}\varphi_t\varphi_t\tr  = \Phi_n\tr  \Phi_n, \quad \bar{R}_n \defeq \frac{1}{n}R_n.
\end{align}
In our analysis we focus on {\em deterministic} regressor sequences $\{\varphi_t\}$, however, our results can be straightforwardly generalized to random exogenous regressors, i.e., regressors that are independent of the noise sequence $\{W_t\}$. In order to do that the assumptions on the regressors \ref{assu:R_nonsigular}-\ref{assu:Phi_Q_coherence} must be satisfied almost surely and the derivations can be traced back to the presented one by fixing a realization of the regressors.
\subsection{Assumptions}
Our core assumptions are the following:
\begin{assumption}{1}\label{assu:noise}
The noise sequence $\{W_t\}$ is independent and contains nonatomic, $\sigma$-subgaussian random variables whose probability distributions are symmetric about zero. 
\end{assumption}
\vspace{-1mm}

An (integrable) random variable $W$ is $\sigma${\em-subgaussian}\cite[Definition 2.2]{wainwright_2019}, if 
for all $\lambda \in \mathbb{R},$ there is a 
$\sigma>0$, such that
    \begin{align}
    \mathbb{E}\big[\exp(\lambda (W\!-\mathbb{E}[W]))\big]\, \leq\, \exp\!{\left(\frac{\lambda^2\sigma^2}{2}\right)}.
    \end{align}
Note that the quantity $\sigma^2$ is often referred to as the variance proxy. Naturally, any Gaussian random variable with variance $\sigma^2$ is $\sigma$-subgaussian.
Furthermore, any distribution with a bounded support on $[a,b]$ is subgaussian with $\sigma = (b-a)/2$ \cite[Example 2.4]{wainwright_2019}. Standard examples of such distributions include the uniform, the triangular and the beta distributions.

Recall that random variable $W$ is called {\em symmetric} about zero if\, $W$ has the same probability distribution as $-W$.
Finally, a random variable $W$ is called {\em nonatomic}, if for all constant $w \in \mathbb{R}$, it holds that $\mathbb{P}(W = w) = 0.$
Notice that every continuous probability distribution is nonatomic.

Our assumptions on the noises are rather mild, since the subgaussian assumption covers a wide range of possible distributions, furthermore, the noise sequence can even be nonstationary, i.e., their distributions can change over time.
The nonatomicity of the distributions is only assumed to simplify the analysis, by using proper random tie-breaking in the SPS Outer Approximation algorithm (Algorithm \ref{alg:sps_eoa}), such as step 3 of Algorithm \ref{alg:sps_indicator}, the assumption could be relaxed.
\begin{assumption}{2}\label{assu:R_nonsigular}
The regressor vectors are ``completely exciting'' in the sense that any $d$ regressors span the whole space, $\BR^d$. 
\end{assumption}
This means that for any subset\, $T$ of the index set $[n]$ with $|\hspace{0.4mm}T\hspace{0.2mm}|=d$, hence, having cardinality $d$, we have that
\vspace{-1mm}
\begin{equation}
    \det\!\left(\frac{1}{d}\sum_{t \in T}\varphi_t\varphi_t\tr \right)\,\neq\, 0.
\end{equation}
\ref{assu:R_nonsigular} ensures under suitable assumptions on the perturbations (step 3 of Algorithm \ref{alg:sps_init}), that the SPS confidence regions, including the outer approximation regions, are {\em bounded}.
\begin{assumption}{3}\label{assu:R_min_eigval}
The excitations are nonvanishing in the sense that there exists a constant\, $\lambda_0 > 0$, such that for all\, $n \geq d:$
\begin{align}
     \lambda_{\text{min}}(\bar{R}_n)\, \geq\, \lambda_0\, > \,0,
\end{align}
where\, $\lambda_{\text{min}}$ denotes the smallest eigenvalue.
\end{assumption}
Assumption \ref{assu:R_min_eigval} guarantees that the averaged ``magnitude'' of the excitation does not get too small, 
hence it provides a lower bound on the signal-to-noise ratio.
\begin{assumption}{4}\label{assu:Phi_Q_coherence}
Let\, $\Phi_n = \Phi_{{\scriptscriptstyle Q},n}\Phi_{{\scriptscriptstyle R},n}$ be the thin QR-decomposition of\, $\Phi_n$. 
There are constants $\kappa >0$ and\, $0 <\rho \leq 1$, such that the following upper bound holds for all\, $n \geq d,$ we have
\begin{align}
    \mu(\Phi_n)\, \defeq\, \frac{n}{d} \max_{1\leq i \leq n}\norm{\Phi_{{\scriptscriptstyle Q},n}\tr e_i}^2 \leq\, \kappa\, n^{1-\rho},
\end{align}
 where $\mu(\Phi_{n})$ is called the coherence of\, $\Phi_{n}$.
\end{assumption}

\ref{assu:Phi_Q_coherence} applies the definition of coherence from \cite[Definition 1.2]{Candes2008ExactMC} along with the facts that $\text{range}(\Phi_n) = \text{range}(\Phi_{{\scriptscriptstyle Q},n})$ and $\Phi_{{\scriptscriptstyle Q},n}\Phi_{{\scriptscriptstyle Q},n}\tr$ is an orthogonal projection onto $\text{range}(\Phi_n)$.
A consequence of this 
is that 
$1 \leq \mu(\Phi_n) \leq n/d$ \cite{Candes2008ExactMC}.
This assumption 
provides an upper bound on how much of the excitation can be ``concentrated'' to a single time step, i.e., onto a particular regressor $\varphi_t$.
In the presented theorems \ref{assu:Phi_Q_coherence} will be assumed alongside with \ref{assu:R_nonsigular}, hence, the thin QR-decomposition of $\Phi_n = \Phi_{{\scriptscriptstyle Q},n}\Phi_{{\scriptscriptstyle R},n}$ is unique, since matrix $\Phi_{{\scriptscriptstyle R},n}$ is full rank.

If the regressors are realizations of i.i.d. continuous random vectors with a positive definite covariance matrix, then it can be shown that the aforementioned assumptions are almost surely satisfied. More rigorously, if the distribution is continuous, then it does not concentrate to any proper affine subspace, hence \ref{assu:R_nonsigular} is satisfied, furthermore, since the regressors are i.i.d., the excitations (a.s.) do not vanish (\ref{assu:R_min_eigval}) and (a.s.) satisfy the coherence assumption (\ref{assu:Phi_Q_coherence}) with some $\rho$ and $\kappa$, as well. 

Realizations of i.i.d. continuous variables as input or as regressors are very common in signal processing practice. Typical examples are the continuous white noise or filtered white noise, with a $d$-ranked matrix FIR or IIR filter.

The SPS EOA algorithm, presented as Algorithm \ref{alg:sps_eoa}, requires much milder assumptions on the noises and regressors.
The introduced stronger assumptions are needed to analyze the 
size of the constructed confidence ellipsoids for any finite sample size 
and almost all realizations of the regressor vectors.

\section{The Sign-Perturbed Sums method}\label{sec:SPS_overview}
This section overviews the original SPS algorithm and its ellipsoidal outer approximation. The intuitive idea behind SPS, detailed descriptions of the algorithms and the proofs of the presented theorems can be found in \cite{Csaji2015} and \cite{szentpeteri2025}.
\subsection{The SPS-Indicator algorithm}\label{sec:SPS-Indicator}
The SPS method consists of two parts. The first part is the initialization, where given the desired confidence probability $p$, the algorithm computes the global variables and generates the required random objects, including the random signs. The SPS-Initialization is given in Algorithm \ref{alg:sps_init}. The second part is the indicator function, which decides whether the input parameter $\theta$ is included in the confidence region. The SPS-Indicator is presented in Algorithm \ref{alg:sps_indicator}. Using this construction, the $p$-level SPS confidence region can be defined as
\begin{equation}
        \confreg[p,n]\, \defeq\, \{\,\theta \in \BR^d\text{ : SPS-Indicator}(\theta) = 1\,\}.
\end{equation}

\floatstyle{spaceruled}% Select new float style
\restylefloat{algorithm}% Apply spaceruled float style to algorithm

\begin{algorithm}[t]
    \vspace*{1mm}
	\caption{Pseudocode: SPS-Initialization\hspace{0.5mm}($p$)}
    \label{alg:sps_init}
	\begin{algorithmic}[1]
		\STATE Given the (rational) confidence probability $p \in (0,1)$, set integers $m > q >0$ such that $p = 1 - q/m$.
		\STATE Calculate the outer product $\bar{R}_n$ and find the principle square root $\bar{R}_n^{1/2}$, such that\vspace{-1mm}
		\begin{equation*}
            \bar{R}_n^{1/2}\bar{R}_n^{{1/2}} = \bar{R}_n.
        \end{equation*}
        \vspace{-5mm}
        \STATE Generate $n\cdot(m-1)$ i.i.d. random signs $\{\alpha_{i,t}\}$ for $i \in [\hspace{0.3mm}m-1\hspace{0.2mm}]$ and $t \in [\hspace{0.3mm}n\hspace{0.2mm}]$, with:
        \vspace{-1mm}
        \begin{align*}
            \BP(\alpha_{i,t} = 1) \,=\, \BP(\alpha_{i,t} = -1)\, =\, 1/2.
        \end{align*}
        \vspace{-5mm}
		\STATE Generate (uniformly) a random permutation $\pi$ of the set $\{0,\dots, m - 1\}$, where each of the $m!$ possible permutations has probability $1/(m!)$ to be selected.
	\end{algorithmic}
\end{algorithm}
\begin{algorithm}[t]
	\caption{Pseudocode: SPS-Indicator\hspace{0.5mm}$(\theta)$}
    \label{alg:sps_indicator}
	\begin{algorithmic}[1]
		\STATE Compute the prediction errors for $\theta:$ for $t \in [\hspace{0.3mm}n\hspace{0.2mm}]$ let
        \vspace{-1mm}
        \begin{align*}
            \varepsilon_t(\theta)\, \defeq\, Y_t - \varphi_t\tr \theta.
        \end{align*}
        \vspace{-5mm}
		\STATE Evaluate for $i \in [\hspace{0.3mm}m-1\hspace{0.2mm}]$ the following functions:
        \vspace{-1mm}
        \begin{align*}
             &S_0(\theta)\, \defeq\, \bar{R}_n^{-\frac{1}{2}}\frac{1}{n}\sum_{t=1}^{n}\varphi_t\varepsilon_t(\theta),\\
             &S_i(\theta)\, \defeq\, \bar{R}_n^{-\frac{1}{2}}\frac{1}{n}\sum_{t=1}^{n}\alpha_{i,t}\varphi_t\varepsilon_t(\theta).
        \end{align*}
        \vspace{-1mm}
		\STATE Compute the rank 
        of $\norm{S_0(\theta)}^2$ among $\{\norm{S_i(\theta)}^2\}:$
            \begin{equation*}
                \mathcal{R}(\theta)\, \defeq\, \left[\,1+\sum_{i=1}^{m-1}\mathbb{I}\left(\norm{S_0(\theta)}^2 \succ_{\pi} \norm{S_i(\theta)}^2\right)\,\right]\!,
            \end{equation*}
        where ``$\succ_{\pi}$'' is ``$>$'' with random tie-breaking, i.e., 
        $\norm{S_k(\theta)}^2 \succ_{\pi} \norm{S_j(\theta)}^2$ if and only if $(\norm{S_k(\theta)}^2 > \norm{S_j(\theta)}^2) \lor
        (\norm{S_k(\theta)}^2 = \norm{S_j(\theta)}^2 \,\land\, \pi(k) > \pi(j))$.           
        \vspace{2mm}
 		\STATE Return 1 if\, $\mathcal{R}(\theta) \leq m - q$, otherwise return 0.
	\end{algorithmic}
\end{algorithm}
It is proved in \cite{Csaji2015} that the confidence region $\confreg[p,n]$ covers the true parameter $\theta^*$ with exact probability $p$, under general assumptions on the noise and regressor sequences, which are much milder than \ref{assu:noise}-\ref{assu:Phi_Q_coherence}, hence the following theorem holds.
\begin{theorem}\label{thm:exact_confidence}
    Assuming the noise sequence $\{W_t\}$ contains independent random variables that are symmetric about zero and matrix $\bar{R}_n$ is nonsingular, the coverage probability of the constructed confidence region is exactly $p$, that is,
    \begin{equation}
            \BP(\theta^* \in \confreg[p,n]) = 1-\frac{q}{m} = p.
            \vspace{1mm}
    \end{equation}
\end{theorem}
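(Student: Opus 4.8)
The plan is to reduce the coverage statement to a distributional symmetry of the SPS reference statistics evaluated at the true parameter, and then to a combinatorial ranking argument. First I would substitute $\theta=\theta^*$ into the SPS-Indicator. Since $\varepsilon_t(\theta^*)=Y_t-\varphi_t\tr\theta^*=W_t$, the reference functions collapse to $S_0(\theta^*)=\bar{R}_n^{-1/2}\frac{1}{n}\sum_{t}\varphi_t W_t$ and $S_i(\theta^*)=\bar{R}_n^{-1/2}\frac{1}{n}\sum_{t}\alpha_{i,t}\varphi_t W_t$, where $\bar{R}_n^{-1/2}$ exists because $\bar{R}_n\succ 0$ is nonsingular. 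Writing $Z_i\defeq\norm{S_i(\theta^*)}^2$, the indicator returns $1$ exactly when the tie-broken rank $\mathcal{R}(\theta^*)$ of $Z_0$ among $Z_0,\dots,Z_{m-1}$ satisfies $\mathcal{R}(\theta^*)\le m-q$, so it suffices to prove that this rank is uniformly distributed on $\{1,\dots,m\}$.

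The core step is to establish that $(Z_0,\dots,Z_{m-1})$ is exchangeable. I would condition on the magnitudes $(\abs{W_1},\dots,\abs{W_n})$ and exploit the assumed symmetry: because the $W_t$ are independent and symmetric about zero, conditionally on these magnitudes each $W_t$ equals $\beta_t\abs{W_t}$ with $\beta_1,\dots,\beta_n$ i.i.d.\ Rademacher signs that are independent of the $\alpha_{i,t}$. Substituting this, $S_i(\theta^*)=\bar{R}_n^{-1/2}\frac{1}{n}\sum_{t}B_{i,t}\,\varphi_t\abs{W_t}$, where $B_{0,t}\defeq\beta_t$ and $B_{i,t}\defeq\alpha_{i,t}\beta_t$ for $i\ge 1$. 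The key observation is that for each fixed $t$ the map $(\beta_t,\alpha_{1,t},\dots,\alpha_{m-1,t})\mapsto(B_{0,t},\dots,B_{m-1,t})$ is a measure-preserving bijection of $\{-1,+1\}^m$, so the entire array $\{B_{i,t}\}$ again consists of i.i.d.\ Rademacher entries. Consequently, conditional on the magnitudes, the rows $(B_{i,1},\dots,B_{i,n})$ for $i=0,\dots,m-1$ are i.i.d., and since each $Z_i$ is the \emph{same} fixed quadratic form applied to its own row, $Z_0,\dots,Z_{m-1}$ are conditionally i.i.d.; integrating over the magnitudes yields unconditional exchangeability.

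Finally I would invoke the ranking lemma: if $Z_0,\dots,Z_{m-1}$ are exchangeable and ties are resolved by an independent uniform random permutation $\pi$ of $\{0,\dots,m-1\}$, as in step~3 of Algorithm~\ref{alg:sps_indicator}, then $\mathcal{R}(\theta^*)$ is uniform on $\{1,\dots,m\}$. Hence $\BP(\theta^*\in\confreg[p,n])=\BP(\mathcal{R}(\theta^*)\le m-q)=(m-q)/m=1-q/m=p$, which is the claim.

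I expect the main obstacle to be the careful treatment of ties in the last step. Since this theorem assumes neither nonatomicity nor subgaussianity, the $Z_i$ need not be almost surely distinct, so uniformity of the rank cannot follow from exchangeability alone and genuinely requires the independent permutation $\pi$ to break ties symmetrically. Making this rigorous means arguing that, conditionally on the multiset of values and on the tie pattern, $\pi$ assigns $Z_0$ each admissible rank with equal probability, and then combining this with the exchangeability of the values. The conditioning argument establishing exchangeability is conceptually clean, but I would state it with care so that the independence of the $\beta_t$ from the $\alpha_{i,t}$ is manifestly preserved after conditioning on the noise magnitudes.
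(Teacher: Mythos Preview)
Your argument is correct and is precisely the standard SPS exact-coverage proof: reduce to $\theta^*$, use symmetry to write $W_t=\beta_t|W_t|$, observe that $(\beta_t,\alpha_{1,t}\beta_t,\dots,\alpha_{m-1,t}\beta_t)$ is again an i.i.d.\ Rademacher vector, conclude conditional i.i.d.\ (hence exchangeability) of the $Z_i$, and finish with the uniform-rank lemma under random tie-breaking. The paper itself does not give a proof of this theorem but simply cites \cite{Csaji2015}, where exactly this exchangeability-plus-ranking argument is carried out, so your proposal matches the referenced proof.
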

It has been also rigorously proven that these confidence regions are strongly consistent \cite{Weyer2017}, which requires some further assumptions on the regressor sequence.
The sample complexity of the SPS-Indicator algorithm was studied in \cite{szentpeteri2025}. The following theorem provides high probability upper bounds for the diameters of the confidence regions generated by the SPS-Indicator algorithm for finite sample sizes $n$.
\begin{theorem}\label{thm:sample_complex_indicator}
    Assume \ref{assu:noise}, \ref{assu:R_nonsigular}, \ref{assu:R_min_eigval} and \ref{assu:Phi_Q_coherence}. Then, the following concentration inequality holds for the diameters of SPS sets.
    For all
    $\delta \in (0,1)$ and\, $n \geq \lceil g^{1/\rho}(\frac{\delta}{m-q}) \rceil,$ we have
    \begin{equation}
        \!\!\!\sup_{\theta_1,\theta_2 \in \confreg[p,n]}\!\!\!\!\|\,\theta_1-\theta_2\,\| \,\leq\,
        \dfrac{4\,f\!\left(\frac{\delta}{m-q}\right)}{\left(n^{1-\rho}\lambda_0\!\left(n^{\rho}-g\!\left(\frac{\delta}{m-q}\right)\right)\right)^{\frac{1}{2}}},
        \vspace{1mm}
    \end{equation}
    with probability at least\, $1-\delta,$ where
    \begin{align}\label{equ:f_and_g_not_appendix}
            &f(\delta) \,\defeq\, \notag
        \begin{cases}
            \sigma\hspace{0.3mm}(\hspace{0.3mm}8\hspace{0.5mm}d\ln^{\frac{1}{2}}(\tfrac{4}{\delta})+d)^{\frac{1}{2}} & 4e^{-(nd\lambda_0)^2} \leq \delta \leq \newtext{1},\\[1mm]
            \sigma\hspace{-0.3mm}\left(8\ln(\tfrac{4}{\delta})+d\right)^{\frac{1}{2}} & 0 < \delta < 4e^{-(nd\lambda_0)^2}, \notag\\
        \end{cases}\\[1mm]
        &g(\delta) \,\defeq\, \ln\left(\tfrac{4d}{\delta}\right)2\hspace{0.3mm}\kappa\hspace{0.3mm} d^2.
    \end{align}
\end{theorem}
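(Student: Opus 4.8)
The plan is to show that, on a single high-probability event, every parameter far from $\theta^*$ is rejected by the SPS-Indicator, so that $\confreg[p,n]$ is contained in a Euclidean ball centered at $\theta^*$ whose radius decays like $1/\sqrt{n}$; the diameter bound then follows by doubling the radius. Writing $\Delta \defeq \theta-\theta^*$ and using $\varepsilon_t(\theta)=W_t-\varphi_t\tr\Delta$, I would first expand the sums. With $u_t\defeq\bar{R}_n^{-1/2}\varphi_t$ (so that $\frac1n\sum_t u_tu_t\tr=I$, which is well defined by \ref{assu:R_min_eigval}), one obtains $S_0(\theta)=V_0-\bar{R}_n^{1/2}\Delta$ and $S_i(\theta)=V_i-\tilde{Q}_i\,\bar{R}_n^{1/2}\Delta$, where $V_0\defeq\frac1n\sum_t u_tW_t$ and $V_i\defeq\frac1n\sum_t\alpha_{i,t}u_tW_t$ are the (perturbed) noise vectors and $\tilde{Q}_i\defeq\frac1n\sum_t\alpha_{i,t}u_tu_t\tr$ is the sign-perturbed, whitened Gram matrix, which has zero mean. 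Setting $\psi\defeq\bar{R}_n^{1/2}\Delta$, the reverse and forward triangle inequalities give $\norm{S_0(\theta)}\ge\norm{\psi}-\norm{V_0}$ and $\norm{S_i(\theta)}\le\norm{V_i}+\norm{\tilde{Q}_i}\,\norm{\psi}$.

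The key observation is that $\norm{S_0}$ grows in $\norm{\psi}$ with unit slope, whereas $\norm{S_i}$ grows with the small slope $\norm{\tilde{Q}_i}$, so for $\norm{\psi}$ large enough $\norm{S_0(\theta)}^2$ strictly exceeds $\norm{S_i(\theta)}^2$. Concretely, $\norm{S_0(\theta)}>\norm{S_i(\theta)}$ holds as soon as $(1-\norm{\tilde{Q}_i})\norm{\psi}>\norm{V_0}+\norm{V_i}$. If this strict domination holds for at least $m-q$ of the perturbed indices, then $\mathcal{R}(\theta)\ge m-q+1>m-q$ and $\theta\notin\confreg[p,n]$. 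Crucially, these inequalities depend on $\theta$ only through the deterministic vector $\psi$, since $V_0,V_i,\tilde{Q}_i$ are $\theta$-free; hence a single favourable event excludes all sufficiently large $\Delta$ simultaneously and no covering-net argument over $\theta$ is needed. Translating back via $\norm{\psi}\ge\sqrt{\lambda_{\text{min}}(\bar{R}_n)}\,\norm{\Delta}\ge\sqrt{\lambda_0}\,\norm{\Delta}$ converts the exclusion into a radius bound in $\Delta$-space.

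The technical heart is the simultaneous high-probability control of these three random quantities; set $\delta'\defeq\delta/(m-q)$. Each noise vector $V_0,V_i$ is an average of independent, zero-mean, $d$-dimensional subgaussian vectors whose covariance is dominated by $\sigma^2I$, so a norm-concentration bound for subgaussian vectors yields $\norm{V_0},\norm{V_i}\le f(\delta')/\sqrt{n}$, with the two regimes of $f$ reflecting the bulk ($d$) and tail ($\ln(4/\delta')$) contributions. For $\tilde{Q}_i$ I would apply a matrix Bernstein inequality to the independent, mean-zero, symmetric summands $\alpha_{i,t}u_tu_t\tr$: the coherence assumption \ref{assu:Phi_Q_coherence} bounds the leverage norms $\norm{u_t}^2=n\,\norm{\Phi_{{\scriptscriptstyle Q},n}\tr e_t}^2\le\kappa d\,n^{1-\rho}$, which controls both the per-term norm and the matrix variance, giving $\norm{\tilde{Q}_i}\le g(\delta')/(2n^{\rho})$ with probability $1-\delta'$. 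This is exactly where the sample-size condition $n\ge\lceil g^{1/\rho}(\delta')\rceil$ enters: it guarantees $g(\delta')\le n^{\rho}$, hence $\norm{\tilde{Q}_i}<1$, so the effective slope $1-\norm{\tilde{Q}_i}$ is positive and the denominator of the bound is real.

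Finally I would combine everything by a union bound over the reference sum and the perturbed sums that must be dominated; allocating the failure budget across them is the source of the $\delta/(m-q)$ arguments of $f$ and $g$. On the resulting event, every $\theta$ with $\sqrt{\lambda_0}\,\norm{\Delta}>(\norm{V_0}+\norm{V_i})/(1-\norm{\tilde{Q}_i})$ is excluded; plugging in $\norm{V_0}+\norm{V_i}\le 2f(\delta')/\sqrt{n}$ together with the elementary inequality $1-\norm{\tilde{Q}_i}\ge 1-g(\delta')/(2n^\rho)\ge(1-g(\delta')/n^\rho)^{1/2}$ shows that $\confreg[p,n]$ lies in a ball of radius $r=2f(\delta')/(n^{1-\rho}\lambda_0(n^\rho-g(\delta')))^{1/2}$, and the stated diameter bound is $2r$. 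I expect the matrix-concentration step for the sign-perturbed Gram matrix to be the main obstacle: obtaining the sharp $n^{-\rho}$ decay of $\norm{\tilde{Q}_i}$ (rather than the loose $n^{-\rho/2}$ of a naive Hoeffding argument) requires feeding the coherence bound into a Bernstein-type variance term, and it is precisely this linear-in-$n^{-\rho}$ rate that produces the clean $(n^\rho-g)$ factor, and hence the optimal $\mathcal{O}(1/\sqrt{n})$ decay of the diameter.
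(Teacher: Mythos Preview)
Your overall architecture---contain $\confreg[p,n]$ in a ball about $\theta^*$ by showing every far $\theta$ satisfies $\|S_0(\theta)\|^2>\|S_i(\theta)\|^2$ for enough indices, control the noise vectors and the sign-perturbed Gram matrix on a single $\theta$-free event, then union-bound across indices---is exactly the route of the cited reference (whose two key concentration steps are restated here as Lemmas~\ref{lemma:numerator_indicator} and~\ref{lemma:denominator_indicator}). The gap is precisely at the step you flag as the main obstacle.

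Matrix Bernstein does \emph{not} yield $\norm{\tilde Q_i}\le g(\delta')/(2n^{\rho})$. With $X_t=\tfrac{1}{n}\alpha_{i,t}u_tu_t\tr$, the coherence assumption gives $\norm{u_t}^2\le \kappa d\,n^{1-\rho}$, so both the uniform term $\norm{X_t}\le \kappa d\,n^{-\rho}$ and the matrix variance $\bigl\|\sum_t\BE[X_t^2]\bigr\|=\bigl\|\tfrac{1}{n^2}\sum_t\norm{u_t}^2u_tu_t\tr\bigr\|\le \kappa d\,n^{-\rho}$ are of the \emph{same} order; the sub-Gaussian regime of Bernstein therefore governs and delivers only $\norm{\tilde Q_i}=O(n^{-\rho/2})$, the same rate as matrix Hoeffding. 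This is exactly Lemma~\ref{lemma:denominator_indicator}: since $\tilde Q_i=R_n^{-1/2}Q_nR_n^{-1/2}$ is orthogonally similar to $K=\Phi_{{\scriptscriptstyle Q},n}\tr D_{\alpha,n}\Phi_{{\scriptscriptstyle Q},n}$, one has $\norm{\tilde Q_i}=\max_j\abs{\lambda_j(K)}\le\sqrt{g(\delta')/n^{\rho}}$, not $g(\delta')/(2n^{\rho})$. Plugging this correct rate into your triangle-inequality step leaves the denominator $1-\norm{\tilde Q_i}\ge 1-\sqrt{g/n^{\rho}}$, and since $1-\sqrt{x}\le\sqrt{1-x}$ on $[0,1]$ the resulting diameter bound is strictly weaker than the one stated. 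The fix, and what the paper's argument does, is to avoid separating $\norm{S_0}$ and $\norm{S_i}$: expanding the \emph{difference} $\norm{S_0(\theta)}^2-\norm{S_i(\theta)}^2$ directly as a quadratic in $\psi$, the leading matrix is $I-\tilde Q_i^2$, so the effective factor is $1-\norm{\tilde Q_i}^2=1-\lambda_{\max}(K^2)\ge 1-g/n^{\rho}$, which is how the clean $(n^{\rho}-g)$ denominator arises. The noise contribution then enters through a single rank-$d$ projection quadratic form $w_n\tr Mw_n$ (Lemma~\ref{lemma:numerator_indicator}), which produces the two regimes of $f$.
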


\subsection{The SPS ellipsoidal outer approximation}
Observe that the SPS-Indicator algorithm detailed in Section \ref{sec:SPS-Indicator} only checks whether a given parameter $\theta$ is included in the confidence region. Building a confidence region using this indicator function requires to evaluate every point on a grid, which is computationally demanding. To give a compact representation of the confidence region around the least squares estimate (LSE) that can be efficiently computed, an ellipsoidal outer approximation (EOA) method was developed, see \cite{Csaji2015}. The confidence region given by this outer approximation is
\begin{equation}\label{equ:SPS_EOA_0onf}
    \confreg[p,n]\, \subseteq\, \outappr[p,n]\, \defeq \bigl\{\hspace{0.3mm}\theta \in \mathbb{R}^d : (\theta - \hat{\theta}_n)\tr  {\bar{R}_n} (\theta - \hat{\theta}_n) \leq r\hspace{0.3mm}\bigl\},
    \vspace{1mm}	
\end{equation}
where $\hat{\theta}_n \defeq R_n^{-1}\Phi_n\tr y_n$ is the LSE and $r$ can be computed from the solutions of the following optimization problems:
\begin{equation}\label{equ:SPS_EOA}
        \begin{aligned}
                \max_{\theta} \quad & \norm{S_0(\theta)}^2\\
                \textrm{s.t.} \quad &\norm{S_0(\theta)}^2 - \norm{S_i(\theta)}^2 \leq 0, \\
        \end{aligned}
\end{equation}
$i \in [\hspace{0.3mm}m-1\hspace{0.2mm}]$.
The precise computation of $r$ and the SPS outer approximation are detailed in Algorithm \ref{alg:sps_eoa}. The optimization problems \eqref{equ:SPS_EOA} are not convex in general, however they can be reformulated to convex semidefinite programs (SDP), which can be solved in polynomial time \cite{Csaji2015}. A detailed description of the these SDP reformulations are given in Section \ref{sec:sample_complex_OA}. As $\outappr[p,n]$ is the sought outer approximation, it satisfies that
\begin{align}
    \BP(\theta^* \in \outappr[p,n]) \geq 1-\frac{q}{m} = p.
\end{align}
\begin{algorithm}[t]
	\caption{Pseudocode: SPS Outer Approximation}
    \label{alg:sps_eoa}
	\begin{algorithmic}[1]
        \STATE  Compute the least-squares estimate,\vspace{-1mm}
        \begin{align*}
            \hat{\theta}_n = R_n^{-1}\Phi_n\tr y_n.
        \end{align*}
        \vspace{-5mm}
		\STATE For $i \in [\hspace{0.3mm}m-1\hspace{0.2mm}]$, solve the optimization problem \eqref{equ:SPS_EOA}, and let $\gamma_i^*$ be the optimal value.
		\STATE Let $r$ be the $q$th largest $\gamma_i^*$ value.
		\STATE The outer approximation of the SPS confidence
            region is given by the ellipsoid 
            \begin{equation*}
                    \outappr[p,n]\, \defeq \bigl\{\hspace{0.3mm}\theta \in \mathbb{R}^d : (\theta - \hat{\theta}_n)\tr  {\bar{R}_n} (\theta - \hat{\theta}_n) \leq r\hspace{0.3mm}\bigl\}.
            \end{equation*}
	\end{algorithmic}
\end{algorithm}
\section{Sample complexity of the\\
SPS Ellipsoidal Outer Approximation}\label{sec:sample_complex_OA}
The SPS ellipsoid gives a compact representation of the confidence region, and compared to applying the SPS-Indicator, it can be used much more easily in various applications, from robust optimization to risk management. It is also significantly less demanding from a computational point of view, as it can be computed in polynomial time. 
The following theorem formalizes the sample complexity of these ellipsoids,
i.e., it shows 
how their radii
shrink as the sample size increases. 
\begin{theorem}\label{thm:sample_complex_eoa}
    Assuming \ref{assu:noise}, \ref{assu:R_nonsigular}, \ref{assu:R_min_eigval} and \ref{assu:Phi_Q_coherence}, the following concentration inequality holds for the sizes of SPS ellipsoidal outer approximation confidence regions.
    For all\, $\delta \in (0, 1)$ and\, $n \geq \lceil g^{1/\rho}(\frac{\delta}{m-q}) \rceil$ with probability at least\, $1-\delta,$ we have
    \begin{align}
        &\sup_{\theta \in \outappr[p,n]}\|\theta - \hat{\theta}_n\| \leq 
        \dfrac{2f(\frac{\delta}{m-q})(n^{\frac{\rho}{2}}+g^{\frac{1}{2}}(\frac{\delta}{m-q}))^{\frac{1}{2}}}{\left(n\lambda_0(n^{\frac{\rho}{2}}-g^{\frac{1}{2}}(\frac{\delta}{m-q}))\right)^{\frac{1}{2}}}.
    \end{align}
\end{theorem}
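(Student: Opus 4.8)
The plan is to split the argument into a purely deterministic reduction and an application of concentration bounds inherited from the analysis behind Theorem~\ref{thm:sample_complex_indicator}. First I would reduce the radius to the scalar $r$. Every $\theta \in \outappr[p,n]$ obeys $(\theta-\hat{\theta}_n)\tr \bar{R}_n (\theta-\hat{\theta}_n) \le r$, and \ref{assu:R_min_eigval} gives $\lambda_{\text{min}}(\bar{R}_n)\ge\lambda_0$, so
\[
\sup_{\theta \in \outappr[p,n]} \norm{\theta-\hat{\theta}_n} \,=\, \sqrt{r/\lambda_{\text{min}}(\bar{R}_n)} \,\le\, \sqrt{r/\lambda_0}.
\]
Since $r$ is the $q$-th largest of the optimal values $\gamma_1^*,\dots,\gamma_{m-1}^*$ of \eqref{equ:SPS_EOA}, it suffices to show that, on a high-probability event, each $\gamma_i^*$ lies below a common threshold. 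Here I would use that the $m-1$ sign-perturbations are conditionally independent given the noise: the count of indices whose $\gamma_i^*$ exceeds the threshold is then dominated by a binomial variable, and a tail bound for its $q$-th order statistic turns a per-index failure probability $\tfrac{\delta}{m-q}$ into an overall confidence of $1-\delta$, matching the bookkeeping of \cite{szentpeteri2025}.

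The core step is a per-index bound on $\gamma_i^*$ that exploits the affine structure of \eqref{equ:SPS_EOA}. Using the normal equations one gets the identity $\norm{S_0(\theta)}^2=(\theta-\hat{\theta}_n)\tr\bar{R}_n(\theta-\hat{\theta}_n)$, so the objective of \eqref{equ:SPS_EOA} is exactly the ellipsoidal form. Writing $v\defeq\bar{R}_n^{1/2}(\theta-\hat{\theta}_n)$, $M_i\defeq\bar{R}_n^{-1/2}\bar{R}_{n,i}\bar{R}_n^{-1/2}$ with $\bar{R}_{n,i}\defeq\tfrac1n\sum_t \alpha_{i,t}\varphi_t\varphi_t\tr$, and $s_i\defeq S_i(\hat{\theta}_n)$, the maps satisfy $\norm{S_0(\theta)}=\norm{v}$ and $S_i(\theta)=s_i-M_i v$, so $\gamma_i^*=\max\{\norm{v}^2:\norm{v}^2\le\norm{s_i-M_i v}^2\}$. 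Evaluating the constraint at the maximizer and using the triangle inequality gives $\sqrt{\gamma_i^*}\le\norm{s_i}/(1-\norm{M_i})$, which is meaningful precisely when $\norm{M_i}<1$; and the residual decomposes as $s_i=S_i(\theta^*)-M_i S_0(\theta^*)$, so $\norm{s_i}\le\norm{S_i(\theta^*)}+\norm{M_i}\,\norm{S_0(\theta^*)}$.

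It remains to insert finite-sample bounds for $\norm{S_0(\theta^*)}$, $\norm{S_i(\theta^*)}$ and $\norm{M_i}$. For the perturbed sums I would use the identity $\norm{S_i(\theta^*)}^2=\tfrac1n\norm{\Phi_{{\scriptscriptstyle Q},n}\tr a_i}^2$ with $a_i\defeq(\alpha_{i,t}W_t)_{t=1}^n$; since $\sum_t\norm{\Phi_{{\scriptscriptstyle Q},n}\tr e_t}^2=d$ and \ref{assu:Phi_Q_coherence} bounds each leverage term by $\kappa d\, n^{-\rho}$, the $\sigma$-subgaussianity and symmetry of \ref{assu:noise} give $\norm{S_i(\theta^*)}\le f(\tfrac{\delta}{m-q})/\sqrt n$ (and likewise for $S_0$). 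A matrix concentration inequality for the zero-mean perturbed Gram matrix $\bar{R}_{n,i}$ yields $\norm{M_i}\le g^{1/2}(\tfrac{\delta}{m-q})/n^{\rho/2}$, and the hypothesis $n\ge\lceil g^{1/\rho}(\tfrac{\delta}{m-q})\rceil$ is exactly what makes $n^{\rho/2}-g^{1/2}(\tfrac{\delta}{m-q})>0$, hence $\norm{M_i}<1$. Substituting these three estimates into the per-index bound, accounting for the factor $\lambda_0^{-1/2}$, and factoring $n^{\rho/2}$ out of the resulting ratio so that the corrections appear as $(n^{\rho/2}\pm g^{1/2}(\tfrac{\delta}{m-q}))^{1/2}$ delivers the claimed inequality.

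I expect the main obstacle to be the optimization step itself. In contrast to Theorem~\ref{thm:sample_complex_indicator}, where one reasons pointwise about the indicator, the quantity of interest is now the optimal value of a non-convex quadratically constrained program (equivalently, its semidefinite reformulation), and the delicate point is to convert the constraint $\norm{v}^2\le\norm{s_i-M_i v}^2$ into a closed-form bound on $\sqrt{\gamma_i^*}$ whose dependence on $\norm{M_i}$ is tight enough that the volume still contracts at the optimal rate $\mathcal{O}(1/\sqrt n)$. A secondary difficulty is the order-statistic bookkeeping: translating the per-index control of $\gamma_i^*$ into a bound on the $q$-th largest value $r$ while keeping the confidence level at $1-\delta$.
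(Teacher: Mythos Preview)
Your route is genuinely different from the paper's and considerably more elementary. The paper never uses the triangle-inequality estimate $\sqrt{\gamma_i^*}\le\norm{s_i}/(1-\norm{M_i})$; instead it carries the SDP reformulation \eqref{equ:sps_eoA_0vxopt} through a full eigenvalue analysis. Concretely, it shows that the optimal value satisfies $\gamma^*=w_n\tr\xi^*P(\xi^*)w_n/n$ for a rank-$d$ matrix $P(\xi^*)$, computes the nonzero eigenvalues of $P(\xi)$ explicitly in terms of the eigenvalues of $K=\Phi_{Q,n}\tr D_{\alpha,n}\Phi_{Q,n}$ (your $M_i$ is orthogonally similar to $K$, so $\norm{M_i}=\lambda_{\max}^{1/2}(K^2)$), minimises $\xi\lambda_{\max}(P(\xi))$ over $\xi$, and arrives at
\[
\gamma^*\;\le\;\frac{X_P}{n}\cdot\frac{1+\norm{M_i}}{1-\norm{M_i}},
\]
with $X_P$ a rank-$d$ projection quadratic form in $w_n$. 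Only two concentration events are then needed (Lemmas \ref{lemma:numerator_indicator} and \ref{lemma:denominator_indicator}), each charged $\delta/2$. Your approach replaces all of this machinery by one line of algebra, at the cost of needing \emph{three} concentration events ($\norm{S_0(\theta^*)}$, $\norm{S_i(\theta^*)}$, $\norm{M_i}$), so your $\delta$-accounting must split accordingly; the paper's $f,g$ already have $4/\delta$ baked in from a two-way split.

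The one place where your outline does not land on the stated bound is the last sentence. Your chain gives
\[
\sqrt{\gamma_i^*}\;\le\;\frac{f(\cdot)}{\sqrt{n}}\cdot\frac{1+\norm{M_i}}{1-\norm{M_i}}
\;\Longrightarrow\;
\sup_{\theta\in\outappr[p,n]}\norm{\theta-\hat\theta_n}\;\le\;\frac{f(\cdot)}{\sqrt{n\lambda_0}}\cdot\frac{n^{\rho/2}+g^{1/2}(\cdot)}{n^{\rho/2}-g^{1/2}(\cdot)},
\]
which carries the ratio $(n^{\rho/2}+g^{1/2})/(n^{\rho/2}-g^{1/2})$ rather than its square root; ``factoring $n^{\rho/2}$ out'' does not convert the former into $(n^{\rho/2}\pm g^{1/2})^{1/2}$. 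The paper gets the square root because its upper bound on $\gamma^*$ (not $\sqrt{\gamma^*}$) already has the linear ratio, and the square root enters only when passing from $\gamma^*$ to $\norm{\theta-\hat\theta_n}$. Your bound still yields the optimal $\mathcal{O}(1/\sqrt{n})$ rate and is in fact tighter than the theorem's for large $n$ (your prefactor tends to $1$, the theorem's to $2$), but it is weaker near the threshold $n\approx g^{1/\rho}(\cdot)$, so it does not literally prove the inequality as stated. Finally, for the order-statistic step the paper does not use conditional independence or a binomial tail; it uses the simpler union bound $\BP(\text{$q$-th largest exceeds }T)\le(m-q)\,\BP(\mathcal{B}_1)$, which is how the factor $m-q$ enters $f$ and $g$.
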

Notice that unlike in Theorem \ref{thm:sample_complex_indicator}, the size of the region is expressed as the distance between any point in the region and the LSE, which is the center of the outer ellipsoid, see \eqref{equ:SPS_EOA_0onf}. 
The functions $f(\delta)$ and $g(\delta)$ are defined in \eqref{equ:f_and_g_not_appendix}, and, as in Theorem \ref{thm:sample_complex_indicator}, they are independent of the sample size, $n$.

Comparing the sample complexities of the SPS-Indicator and SPS EOA, it can be observed from Theorems \ref{thm:sample_complex_indicator} and \ref{thm:sample_complex_eoa} that the SPS EOA is more conservative, however, the difference is not significant. This behavior is a consequence of the construction, it is expected that an outer approximation has a slightly worse sample complexity. Despite this, as shown in the following corollary, the shrinkage rates of both algorithms are the same, which makes the SPS EOA optimal, since apart from constant factors, this rate cannot be improved.
\begin{corollary}\label{cor:eoa_rate}
    Under the assumptions of Theorem \ref{thm:sample_complex_eoa}, the sizes of the confidence regions generated by the SPS ellipsoidal outer approximation algorithm shrink at a rate $\mathcal{O}(1/\sqrt{n})$.
\end{corollary}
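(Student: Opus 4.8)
The plan is to read off the shrinkage rate directly from the explicit finite-sample bound established in Theorem~\ref{thm:sample_complex_eoa} by performing an elementary asymptotic analysis in $n$. Throughout, I would fix the confidence parameter $\delta \in (0,1)$ and regard all the problem-dependent quantities $\sigma$, $d$, $\lambda_0$, $\rho$, $\kappa$, $m$ and $q$ as constants; the only goal is to isolate the dependence on the sample size $n$ in the upper bound
\begin{align*}
    \sup_{\theta \in \outappr[p,n]}\|\theta - \hat{\theta}_n\| \leq \dfrac{2f(\frac{\delta}{m-q})(n^{\frac{\rho}{2}}+g^{\frac{1}{2}}(\frac{\delta}{m-q}))^{\frac{1}{2}}}{\left(n\lambda_0(n^{\frac{\rho}{2}}-g^{\frac{1}{2}}(\frac{\delta}{m-q}))\right)^{\frac{1}{2}}}.
\end{align*}

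First I would argue that the two factors $f(\frac{\delta}{m-q})$ and $g(\frac{\delta}{m-q})$ are constants in $n$. For $g$ this is immediate from its definition in \eqref{equ:f_and_g_not_appendix}. For $f$ some care is needed, because its piecewise definition involves the threshold $4e^{-(nd\lambda_0)^2}$, which itself depends on $n$. However, this threshold decreases monotonically to $0$, so for every fixed $\delta$ there is an $n_0$ beyond which $\frac{\delta}{m-q} \geq 4e^{-(nd\lambda_0)^2}$, placing us in the first branch and yielding the $n$-free value $f(\frac{\delta}{m-q}) = \sigma(8d\ln^{1/2}(\frac{4(m-q)}{\delta})+d)^{1/2}$. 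Hence, for all sufficiently large $n$, both $F \defeq f(\frac{\delta}{m-q})$ and $G \defeq g(\frac{\delta}{m-q})$ may be treated as positive constants.

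Next I would bound the numerator and denominator separately. The admissibility hypothesis $n \geq \lceil g^{1/\rho}(\frac{\delta}{m-q})\rceil$ guarantees $n^{\rho} \geq G$, and for all sufficiently large $n$ in fact $n^{\rho/2} \geq 2G^{1/2}$, so that $G^{1/2} \leq n^{\rho/2}$ and $n^{\rho/2} - G^{1/2} \geq \tfrac{1}{2}n^{\rho/2}$. Using the first inequality, the numerator satisfies $2F(n^{\rho/2}+G^{1/2})^{1/2} \leq 2F(2n^{\rho/2})^{1/2} = \mathcal{O}(n^{\rho/4})$. Using the second, the denominator satisfies $(n\lambda_0(n^{\rho/2}-G^{1/2}))^{1/2} \geq (\tfrac{1}{2}\lambda_0\, n^{1+\rho/2})^{1/2} = \Omega(n^{1/2+\rho/4})$. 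Dividing, the $n^{\rho/4}$ factors cancel exactly and the whole bound is $\mathcal{O}(n^{\rho/4-(1/2+\rho/4)}) = \mathcal{O}(n^{-1/2}) = \mathcal{O}(1/\sqrt{n})$, with hidden constant of order $F/\sqrt{\lambda_0}$ (depending on $\delta$ only through $F$).

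I do not expect a genuine obstacle here, since the entire content is the cancellation of the $n^{\rho/2}$ contributions in the numerator and denominator against the single factor of $n$ coming from the normalization by the sample size. The only two points requiring attention are the $n$-dependent regime switch in the definition of $f$, handled by the monotone decay of its threshold to zero, and the positivity of the denominator, handled by strengthening the stated admissibility condition $n^{\rho} \geq G$ to $n^{\rho/2} \geq 2G^{1/2}$ for large $n$; neither affects the asymptotic rate.
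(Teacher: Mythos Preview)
Your proposal is correct and follows essentially the same approach as the paper: both extract the rate directly from the bound of Theorem~\ref{thm:sample_complex_eoa} by observing that the numerator scales like $n^{\rho/4}$ and the denominator like $n^{1/2+\rho/4}$, so the $n^{\rho/4}$ factors cancel to give $\mathcal{O}(1/\sqrt{n})$. Your version is just more careful than the paper's one-line proof, in that you explicitly address the $n$-dependent branch switch in $f$ and the positivity of $n^{\rho/2}-g^{1/2}$.
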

\begin{proof}
    From the sample size $(n)$ dependent terms in the upper bound of Theorem \ref{thm:sample_complex_eoa}, it can be seen that the decrease rate is $\mathcal{O}(n^{\frac{\rho}{4}} / (n^{\frac{1}{2}}\cdot n^{\frac{\rho}{4}})) = \mathcal{O}(1/\sqrt{n})$.
\end{proof}
\begin{remark}
In our previous work \cite{szentpeteri2023scalar}, we have investigated the sample complexity of the SPS EOA for the special case of scalar linear regression. That result concluded geometric decrease rate for the sizes of the SPS EOA intervals, however Corollary \ref{cor:eoa_rate} is not only more general, but also improves upon that rate. The reason behind this 
is that in \cite{szentpeteri2023scalar} the derivation of the 
EOA concentration bounds used the same techniques as the proof of high probability upper bounds on the distance between the confidence interval and the true parameter $\theta^*$. Note that in Theorems \ref{thm:sample_complex_indicator} and \ref{thm:sample_complex_eoa} the size of the confidence set is not expressed as a distance from the true parameter, but as a distance between any two points in the region.
\end{remark}

\begin{proof}[\textbf{Proof of Theorem \ref{thm:sample_complex_eoa}}]
In the first step of the proof, we assume that $m=2$ and $q=1$, then we reformulate the optimization problem \eqref{equ:SPS_EOA} into a convex semidefinite program, unless the region is unbounded which case is treated separately. We then show that the SDP has an optimal value, and also give an upper bound for this optimal value. In the final step, we derive a concentration inequality for this upper bound (and for the size of the confidence region), and generalize this result to arbitrary $m$ and $q$ choices. 

{\em Step i)} In order to obtain an outer approximation of the confidence region for two sums ($m=2$), the following optimization problem should be solved \eqref{equ:SPS_EOA}
\vspace{-1mm}
\begin{equation}\label{equ:SPS_EOA_m2}
    \begin{aligned}
        \max_{\theta} \quad & \norm{S_0(\theta)}^2\\
        \textrm{s.t.} \quad & \norm{S_0(\theta)}^2 - \norm{S_1(\theta)}^2\leq 0.
    \end{aligned}
\end{equation}
To reformulate the above optimization problem as a convex problem, a result from \cite{Csaji2015} can be used. 
\vspace{-1mm}
Let
\begin{equation}
    \psi_n \defeq \Phi_n\tr  D_{\alpha,n} y_n.
\end{equation}
By expanding $\norm{S_0(\theta)}^2$ as in \cite[Section VI.]{Csaji2015}, we get that
\begin{equation}\label{equ:reform_obj_cvx}
    \norm{S_0(\theta)}^2 = (\theta - \hat{\theta}_n)\tr  \bar{R}_n (\theta - \hat{\theta}_n).
\end{equation}
Then, by introducing $z \defeq \bar{R}_n^{\frac{1}{2}}(\theta - \hat{\theta}_n)$, the optimization problem \eqref{equ:SPS_EOA_m2} can be reformulated as \cite[Section VI.]{Csaji2015}
\begin{equation}\label{equ:SPS_EOA_m2_z}
    \begin{aligned}
        \max \quad & \norm{z}^2\\
        \textrm{s.t.} \quad & z\tr  A_0 z + 2z\tr  b_0 + c_0 \leq 0,
    \end{aligned}
    \vspace{-2mm}
\end{equation}
where
\begin{align}
&A_0 &\defeq& I - R_n^{-\frac{1}{2}}Q_nR_n^{-1}Q_nR_n^{-\frac{1}{2}}= R_n^{-\frac{1}{2}}AR_n^{-\frac{1}{2}},\label{equ:def_Ac}\\
&b_0 & \defeq &\frac{1}{\sqrt{n}}R_n^{-\frac{1}{2}}Q_nR_n^{-1}\left(\psi_n - Q_n\hat{\theta}_n\right)\notag\\
    && = &\frac{1}{\sqrt{n}}R_n^{-\frac{1}{2}}Q_nR_n^{-1}(\Phi_n\tr  D_{\alpha,n} - Q_nR_n^{-1}\Phi_n\tr )y_n\notag\\
    && = &\frac{1}{\sqrt{n}}R_n^{-\frac{1}{2}}Q_nR_n^{-1} B D_{\alpha,n} y_n,\label{equ:def_bc}\\
&c_0 &\defeq&\frac{1}{n}\left(-\psi_n\tr  R_n^{-1}\psi_n + 2\hat{\theta}_n\tr  Q_nR_n^{-1}\psi_n - \hat{\theta}_n\tr                            Q_nR_n^{-1}Q_n\hat{\theta}_n\right)\notag\\
    && = &-\frac{1}{n}\left(y_n\tr  \left[D_{\alpha,n}\Phi_n R_n^{-1}\Phi_n\tr  D_{\alpha,n}  \right.\right.\notag\\
    &&&\left.\left. -\Phi_n R_n^{-1}Q_nR_n^{-1}\Phi_n\tr  D_{\alpha,n} - D_{\alpha,n}\Phi_n R_n^{-1}Q_nR_n^{-1}\Phi_n\tr  \right.\right.\notag\\
    &&&\left.\left. + \Phi_n R_n^{-1}Q_nR_n^{-1}QR_n^{-1}\Phi_n\tr \right]y_n\right)\notag\\
    &&=&-\frac{1}{n}(y_n\tr( D_{\alpha,n}\Phi_n - \Phi_n R_n^{-1}Q_n)R_n^{-1}(\Phi_n\tr D_{\alpha,n} - \notag\\ &&&Q_nR_n^{-1}\Phi_n\tr)y_n)\notag\\
    &&=&-\frac{1}{n}(y_n\tr D_{\alpha,n}B\tr R_n^{-1}B D_{\alpha,n}y_n),
\end{align}
and $D_{\alpha, n}$, $Q_n$, $A$ and $B$ are defined as
\begin{align}\label{equ:def_D}
    &D_{\alpha, n} \defeq
    \begin{bmatrix}
        \alpha_{1} & & \\
        & \ddots & \\
        & & \alpha_{n}
    \end{bmatrix},\\
    &Q_n \defeq \sum_{t=1}^n \alpha_t\varphi_t\varphi_t\tr  = \Phi_n\tr  D_{\alpha,n}\Phi_n,\label{equ:def_Q}\\
    &A \defeq R_n - Q_nR_n^{-1}Q_n,\label{equ:def_A}\\ 
    &B \defeq \Phi_n\tr - Q_nR_n^{-1}\Phi_n\tr  D_{\alpha,n}.\label{equ:def_B}
\end{align}
Note that in \cite[Theorem 1]{Care2022} it has been shown that the SPS confidence regions are bounded, if both the perturbed and the unperturbed regressors span the whole space. This means that assuming \ref{assu:R_nonsigular} and $m=2$, 
the SPS region is bounded if $A$ is positive definite, which is independent of $\{W_t\}$.

In case of a bounded SPS confidence region, the ellipsoid $\mathcal{E}_1 = \{\theta: \|S_0(\theta)\|^2 - \|S_1(\theta)\|^2 \leq 0 \}$ is compact and non-empty, since it is a closure of the bounded SPS-Indicator set $\{\theta: \|S_0(\theta)\|^2 \prec_{\pi} \|S_1(\theta)\|^2 \}$, and by construction of the SPS region $\|S_0(\hat{\theta}_n)\|^2 = 0$, therefore the LSE is always included in $\mathcal{E}_1$. In Lemma \ref{lemma:S1_LS_notnull} we show that if the SPS region is bounded, given $\{\varphi_t\}$ and $\{\alpha_{1,t}\}$, it almost surely holds that $\|S_1(\hat{\theta}_n)\|^2 \neq 0$. From the previous two properties, it follows that the optimization problem \eqref{equ:SPS_EOA_m2} is strictly feasible, i.e., Slater's condition hold, if we work with bounded SPS confidence regions. Consequently, strong duality holds \cite[Appendix B]{Boyd2009}, therefore the value of the above optimization program \eqref{equ:SPS_EOA_m2_z} is equal to the value of its dual which can be formulated by using the Schur complement, as follows:
\vspace{-1mm}
\begin{equation}\label{equ:sps_eoA_0vxopt}
\begin{aligned}
    \min \quad & \gamma_0\\
    \textrm{s.t.} \quad & \xi \geq 0 \\
                        & \begin{bmatrix}
                            -I + \xi A_0 & \xi b_0\\
                            \xi b_0\tr  & \xi c_0 + \gamma_0
                        \end{bmatrix} \succeq 0.
\end{aligned}
\vspace{1mm}
\end{equation}
The semidefinite optimization problem \eqref{equ:sps_eoA_0vxopt} is {\em convex} and it can be solved using standard convex optimization tools. 
    {\newtext 
    Note that problem \eqref{equ:sps_eoA_0vxopt} was originally proposed by \cite{Csaji2015}, but only for the case of bounded regions. Nevertheless, here we also allow and handle the case of unbounded regions, c.f.\ \eqref{equ:gammas_def}.
    }

{\em Step ii)} As a next step, we argue that there exists an
optimal value $\gamma_0^*$ for the semidefinite program \eqref{equ:sps_eoA_0vxopt}, and also derive a formula for the solution of \eqref{equ:SPS_EOA_m2}, denoted by $\gamma^*$, which  is even valid for unbounded regions. Note that when we investigate the properties of $\gamma_0^*$ or that of problem \eqref{equ:sps_eoA_0vxopt}, we always assume that the confidence region is bounded, see \eqref{equ:gammas_def}.

In case of unbounded SPS regions, 
the solution of program \eqref{equ:SPS_EOA_m2} is always $\infty$. To construct a well-defined solution $\gamma^*$ for the optimization problem
\eqref{equ:SPS_EOA_m2},
we introduce the
cases
\begin{align}\label{equ:gammas_def}
    \gamma^* \defeq
    \begin{cases}
        \gamma_0^* &\text{if the SPS region is bounded,}\\
        \infty & \text{if the SPS region is unbounded.} \\
    \end{cases}
\end{align}

Since $\norm{S_0(\theta)}^2$ is a continuous function and in \eqref{equ:SPS_EOA_m2} the constraint set is compact in the case of bounded SPS regions, it follows that \eqref{equ:SPS_EOA_m2} has a solution. As we mentioned, in the reformulation from \eqref{equ:SPS_EOA_m2} to \eqref{equ:sps_eoA_0vxopt} strong duality holds, hence the convex optimization problem \eqref{equ:sps_eoA_0vxopt} has an optimal value $\gamma_0^*$.

Throughout the proof we will use the following notation
\begin{equation}\label{equ:def_Ac-coma}
    A'_0(\xi) \defeq -\tfrac{1}{\xi}I +  A_0 = -\tfrac{1}{\xi}I + R_n^{-\frac{1}{2}}AR_n^{-\frac{1}{2}},
\end{equation}
\begin{equation}
    Z \defeq \begin{bmatrix}
        \xi A'_0(\xi) &  \xi b_0 \\
         \xi b_0\tr  &  \xi c_0 + \gamma_0
    \end{bmatrix}.
\end{equation}
Lower bounds on
variables 
$\gamma_0$ and $\xi$
\eqref{equ:sps_eoA_0vxopt} can be determined by the conditions for positive semidefiniteness 
of (generalized) Schur complements \cite[Appendix A.5.5]{Boyd2009}, that is
\begin{numcases}{Z \succeq 0 \Leftrightarrow}
\xi A'_0(\xi) \succeq 0,\label{equ:psd_condition_1}\\\notag\\
\left(I-\left(\xi A'_0(\xi)\right)\left(\xi A'_0(\xi)\right)^\dagger\right)\xi b_0 =\notag\\
\left(I-A'_0(\xi)\left(A'_0(\xi)\right)^\dagger\right)\xi b_0 = 0,\label{equ:psd_condition_2}\\\notag\\
\xi c_0 + \gamma_0 - \xi b_0\tr  \left(\xi A'_0(\xi)\right)^\dagger \xi b_0 = \notag\\
\xi c_0 + \gamma_0 - \xi b_0\tr  \left(A'_0(\xi)\right)^\dagger b_0 \succeq 0,\label{equ:psd_condition_3}
\end{numcases}
where we used the notation ``$(\cdot)^\dagger$'' for the pseudoinverse. 

To give a lower bound on $\xi$, we will reformulate the condition $\xi A'_0(\xi) \succeq 0$. First, we rewrite $A$ {\newtext as in \cite{szentpeteri2025}}
\begin{align}\label{equ:reformulation_A}
    A = &\;R_n - Q_nR_n^{-1}Q_n = \Phi_{R,n}\tr \Phi_{R,n} - \notag\\
    &\Phi_{R,n}\tr (\Phi_{Q,n}\tr  D_{\alpha,n}\Phi_{Q,n}\Phi_{Q,n}\tr  D_{\alpha,n}\Phi_{Q,n})\Phi_{R,n},
\end{align}
where $\Phi_n = \Phi_{Q,n}\Phi_{R,n}$ is the thin QR-decomposition of $\Phi_n$, and $\Phi_{R,n}$ is nonsingular ensured by \ref{assu:R_nonsigular}.
By introducing 
\begin{align}\label{equ:def_K}
    &K \defeq \Phi_{Q,n}\tr  D_{\alpha,n}\Phi_{Q,n},
\end{align}
matrix $A$ can be further reformulated as
\begin{align}\label{equ:reformulation_A_2}
    &\Phi_{R,n}\tr (I -\Phi_{Q,n}\tr  D_{\alpha,n}\Phi_{Q,n}\Phi_{Q,n}\tr  D_{\alpha,n}\Phi_{Q,n})\Phi_{R,n} = \notag\\ 
    &\Phi_{R,n}\tr (I - K^2)\Phi_{R,n}.
\end{align}
Note that matrix $A$ is positive semidefinite 
and $\Phi_{R,n}$ is full rank (\ref{assu:R_nonsigular}), therefore $I-K^2$ is also positive semidefinite. Thus, it holds for the eigenvalues of $K$ that $\forall\, i: \abs{\lambda_i(K)} \leq 1$. 

Recall that the SPS set is unbounded if and only if $A$ has zero eigenvalue \cite{Care2022}.
It follows that $K^2$ has an eigenvalue $1$, if and only if the region is unbounded. From \ref{assu:R_nonsigular} it follows that $K$ is nonsingular, therefore $\forall\, i:0 < \abs{\lambda_i(K)}$. Substituting formula \eqref{equ:reformulation_A_2} for $A$ back to the first condition \eqref{equ:psd_condition_1}, we get
\begin{align}\label{equ:lam_lb_start}
    &-I + \xi(R_n^{-\frac{1}{2}}\Phi_{R,n}\tr (I-K^2) \Phi_{R,n}R_n^{-\frac{1}{2}}) = \notag\\
    &-I + \xi(I -R_n^{-\frac{1}{2}}\Phi_{R,n}\tr K^2 \Phi_{R,n}R_n^{-\frac{1}{2}}) \succeq 0.
\end{align}
The principal square root matrix $R_n^{-1/2}$ can be reformulated as $V_{\Phi_{R}}\Sigma_{\Phi_{R}}^{-1}V_{\Phi_{R}}\tr $ using the SVD decomposition of $\Phi_{R,n} = U_{\Phi_{R}}\Sigma_{\Phi_{R}}V_{\Phi_{R}}\tr $, since $R_n = \Phi_{R,n}\tr \Phi_{R,n} = V_{\Phi_{R}}\Sigma_{\Phi_{R}}^2V_{\Phi_{R}}\tr $. Then, the first condition \eqref{equ:psd_condition_1} can be written as
\begin{align}\label{equ:Phi_R_SVD}
    &-I + \xi(I - V_{\Phi_{R}}U_{\Phi_{R}}\tr K^2U_{\Phi_{R}}V_{\Phi_{R}}\tr ) \succeq 0.
\end{align}
Using that $ V_{\Phi_{R}}$ and $U_{\Phi_{R}}$ are orthogonal matrices, we have $V_{\Phi_{R}}U_{\Phi_{R}}\tr  U_{\Phi_{R}}V_{\Phi_{R}}\tr  = I$, the condition can be rewritten as
\begin{equation}\label{equ:A_psi_reform}
V_{\Phi_{R}}U_{\Phi_{R}}\tr (-I + \xi(I -K^2)) U_{\Phi_{R}}V_{\Phi_{R}}\tr  \succeq 0,
\end{equation}
from which it follows that
\begin{align}\label{equ:lam_lb_end}
    &\xi(I -K^2)-I \succeq 0, \notag\\
    &\xi \geq \frac{1}{\lambda_{\text{min}}(I-K^2)}.
\end{align}
Since we know that 
$\forall\, i:0<\abs{\lambda_i(K)} < 1$ (cf.\ bounded SPS region), it follows that $1/\lambda_{\text{min}}(I-K^2) =1/(1-\lambda_{\text{max}}(K^2))$.

We conclude from the first condition \eqref{equ:psd_condition_1} that the minimum value that $\xi$ can attain is $1/(1-\lambda_{\text{max}}(K^2))$. Lemma \ref{lemma:condition2_not_null} shows that if $\xi = 1/(1-\lambda_{\text{max}}(K^2))$, then the second condition \eqref{equ:psd_condition_2} is violated, i.e., $\big(I-\left( A'_0(\xi) \right)\left(A'_0(\xi)\right)^\dagger\big)\xi b_0 \neq 0$ almost surely. Hence, it must hold that $\xi > 1/(1-\lambda_{\text{max}}(K^2))$ to satisfy the first and second conditions simultaneously. 
Therefore, $\xi^*$ 
takes the form $\xi^*= 1/(1-\lambda_{\text{max}}(K^2)) + \delta_{\xi}$, for some $\delta_{\xi} > 0$. 

Next, from the third condition \eqref{equ:psd_condition_3}, we can deduce a lower bound on $\gamma_0$ as follows
\begin{align}
    &\xi c_0 + \gamma_0 - \xi b_0\tr \left(A'_0(\xi)\right)^\dagger b_0 \succeq 0 \notag\\
    &\gamma_0 \geq \xi \left(b_0\tr \left(A'_0(\xi)\right)^\dagger b_0 - c_0\right).
\end{align}
So far we have reformulated the constraints of the optimization problem \eqref{equ:sps_eoA_0vxopt}, $\xi \geq 0$ and $Z \succeq 0$, to $\xi >1/(1-\lambda_{\text{max}}(K^2))$ and $\gamma_0 \geq \xi \left(b_0\tr \left(A'_0(\xi)\right)^\dagger b_0 - c_0\right)$. The value of $\gamma_0$ only depends on one constraint, therefore it is straightforward to see that
\begin{equation}
    \gamma_0^* = \xi^* \left(b_0\tr \left(A'_0(\xi^*)\right)^\dagger b_0 - c_0\right),
\end{equation}
since if for some $\xi^* >1/(1-\lambda_{\text{max}}(K^2))$, there were a $\gamma_0^* \neq \xi^* \left(b_0\tr \big(A'_0(\xi^*)\right)^\dagger b_0 - c_0\big)$, then there would be a $(\gamma_0^{*})'$ that satisfies the constraints of the  problem and $(\gamma_0^{*})' < \gamma_0^*$.

The formula for $\gamma_0^*$ can be extended to $\gamma^*$, since as we mentioned before, if the SPS region is unbounded, then $\lambda_{\text{max}}(K) = 1$ and we can consider $\xi^* = $``$1/0$''$ +  \delta_{\xi} = \infty$, therefore $\gamma^* = \infty$ as in \eqref{equ:gammas_def}.

{\em Step iii)} Now that we have a formula for $\gamma^*$ we rewrite and give an upper bound for it that does not depend on $\xi^*$. As in the previous step, we first reformulate $\gamma_0^*$ in the case of bounded SPS regions and then extend it to unbounded regions.

Our secondary objective is to bring this upper bound into a form such that the following two lemmas could be applied, which were originally stated and proved in \cite{szentpeteri2025}.
\begin{lemma}\label{lemma:numerator_indicator}
    Assuming \ref{assu:noise} and \ref{assu:R_nonsigular}, the following concentration inequality holds for 
    $X = w_n\tr  M w_n$, where $M$ is a symmetric projection matrix with rank($M$) $\leq d$: for every $\varepsilon \geq 0$,
    \begin{align}
        \BP\left(\frac{\vert X - \mathbb{E}X\vert}{n\lambda_0} \geq \varepsilon\right) \leq 
        \begin{cases}
            2\exp(-\frac{\varepsilon^2n^2\lambda_0^2}{64d^2\sigma^4}) & 0 \leq \varepsilon \leq 8\sigma^2 d^2\\[1mm]
            2\exp(-\frac{\varepsilon n\lambda_0}{8\sigma^2}) & \varepsilon > 8\sigma^2{d^2}.
        \end{cases}
    \end{align}   
\end{lemma}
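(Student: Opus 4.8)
The plan is to read Lemma~\ref{lemma:numerator_indicator} as a Bernstein-type (Hanson--Wright-style) concentration inequality for a quadratic form $X = w_n\tr M w_n$ of a vector with independent subgaussian coordinates, specialized to the case where $M$ is a symmetric orthogonal projection of rank at most $d$. The first step I would take is to exploit the projection structure: since $M$ is symmetric and idempotent, its eigenvalues lie in $\{0,1\}$ with at most $d$ of them equal to $1$, so writing $M = \sum_{k=1}^{r} v_k v_k\tr$ with $r \leq d$ orthonormal eigenvectors gives the diagonal form $X = \sum_{k=1}^{r}Y_k^2$, where $Y_k \defeq v_k\tr w_n$. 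This reduces the problem from a general quadratic form to a (dependent) sum of at most $d$ squared linear forms, and it is here that \ref{assu:R_nonsigular} is used, to guarantee that the relevant projection is well defined and of the claimed rank.

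The core steps are then the following. First, by \ref{assu:noise} the coordinates of $w_n$ are independent and $\sigma$-subgaussian, hence each linear form $Y_k = \sum_i v_{k,i} W_i$ is itself $\sigma$-subgaussian, because $\norm{v_k}=1$ and variance proxies add over independent summands; in particular $\BE[Y_k^2] \leq \sigma^2$. Second, I would show that each centered square $Y_k^2 - \BE[Y_k^2]$ is subexponential and bound its moment generating function, obtaining a variance-type parameter of order $\sigma^4$ and a scale parameter of order $\sigma^2$ (the standard ``square of a subgaussian is subexponential'' estimate, whose upper tail accounts for the $8\sigma^2$ scale in the exponential branch). Third, to control $X - \BE X = \sum_{k}(Y_k^2 - \BE[Y_k^2])$, whose terms are \emph{not} independent, I would aggregate either through a triangle/union split into the $r \leq d$ directions or, more sharply, through a H\"older decoupling of the joint MGF; either route inflates the effective variance proxy from order $r\sigma^4$ to order $r^2\sigma^4 \leq d^2\sigma^4$, which is exactly the origin of the $d^2$ appearing in the Gaussian branch. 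Finally, a Chernoff argument applied to the resulting MGF yields the two-regime (subgaussian/subexponential) tail, and normalizing the deviation by $n\lambda_0$ together with a union bound over the upper and lower tails produces the stated piecewise form with the leading factor $2$.

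The main obstacle is the aggregation step: the dependence among the $Y_k$, equivalently the off-diagonal ``chaos'' $\sum_{i\neq j} M_{ij} W_i W_j$ in the original quadratic form, is precisely what makes such inequalities nontrivial. This is where the symmetry of the noise assumed in \ref{assu:noise} becomes essential, since it enables a decoupling/symmetrization argument for the cross terms. The remaining difficulty is purely bookkeeping, namely carrying the constants through the MGF estimates and the Chernoff optimization to recover the precise factors $64\,d^2\sigma^4$ and $8\sigma^2$ and the threshold separating the two branches. I also note that $\lambda_0$ plays no intrinsic role in the concentration itself and enters only through the normalization $n\lambda_0$, which is chosen to match the form in which the bound is applied downstream.
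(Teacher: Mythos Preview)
The paper does not actually prove this lemma; it is quoted from \cite{szentpeteri2025}, and the only structural hint given here (in Appendix~\ref{sec:proof_sample_complex_eoa_m2_q1}) is that the cited argument shows $X \leq \sum_{i=1}^d \tilde w_{d,i}^2$ with the $\tilde w_{d,i}$ centered $\sigma$-subgaussians, which is exactly your eigenvector reduction $X=\sum_{k\leq r} Y_k^2$. Your remaining steps---square of a $\sigma$-subgaussian is subexponential, aggregate the (dependent) centered squares via H\"older on the joint MGF, then Chernoff---are a correct and standard route to the stated Bernstein-type two-regime tail, and the H\"older step is indeed what produces the $d^2\sigma^4$ variance scale in the Gaussian branch.

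Two minor misattributions are worth correcting. Assumption~\ref{assu:R_nonsigular} plays no role in the proof of the lemma: the statement already hands you a symmetric projection $M$ of rank at most $d$, so nothing about the regressors is needed to make the eigendecomposition well-defined; the hypothesis appears only because in the \emph{applications} of the lemma the matrix $M$ is built from the data. Likewise, the symmetry of $\{W_t\}$ is not essential here---independence and subgaussianity from \ref{assu:noise} alone drive the concentration, as in any Hanson--Wright argument; your remark that symmetry ``enables decoupling of the cross terms'' conflates the noise-symmetry assumption with the unrelated Rademacher symmetrization device used in some proofs of Hanson--Wright.
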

\begin{lemma}\label{lemma:denominator_indicator}
    Assuming \ref{assu:R_nonsigular} and \ref{assu:Phi_Q_coherence} the following concentration inequality holds for every $0 < \varepsilon_0 \leq 1$,
    \vspace{-1mm}
    \begin{equation}
        \BP\left(\max_i \abs{\lambda_i(K)} \geq \varepsilon_0\right) \leq 2d\exp\left(-\frac{n^{\rho}\varepsilon_0^2}{2\kappa d^2}\right).
        \vspace{1mm}
    \end{equation}
\end{lemma}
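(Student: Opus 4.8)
The plan is to recognize $K \defeq \Phi_{Q,n}\tr D_{\alpha,n}\Phi_{Q,n}$ as a \emph{Rademacher series} of fixed symmetric matrices and then invoke a matrix Hoeffding (matrix Rademacher) tail inequality. Writing $q_t \defeq \Phi_{Q,n}\tr e_t \in \BR^d$ for the $t$-th row of $\Phi_{Q,n}$ and recalling that $D_{\alpha,n}$ collects the i.i.d.\ Rademacher signs $\alpha_t$, I would first rewrite $K = \sum_{t=1}^n \alpha_t\, q_t q_t\tr$, a sum of independent, mean-zero, symmetric $d\times d$ matrices, since each $q_t q_t\tr$ is a fixed rank-one positive semidefinite matrix and $\BE[\alpha_t]=0$. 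I would also record the identity $\sum_{t} q_t q_t\tr = \Phi_{Q,n}\tr\Phi_{Q,n} = I$, which holds because the thin-QR factor $\Phi_{Q,n}$ has orthonormal columns (uniqueness guaranteed by \ref{assu:R_nonsigular}).

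Second, I would reduce the two-sided spectral tail to a one-sided one. Since $K$ is symmetric, $\max_i\abs{\lambda_i(K)} = \max\!\big(\lambda_{\text{max}}(K),\, -\lambda_{\text{min}}(K)\big)$, and because $-K$ has the same distribution as $K$ (flipping every $\alpha_t$ is measure preserving), a union bound reduces the event to twice the event $\{\lambda_{\text{max}}(K)\ge\varepsilon_0\}$; this is precisely what produces the leading factor $2$ in the statement. It then remains to bound $\BP(\lambda_{\text{max}}(K)\ge\varepsilon_0)$.

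Third, I would apply the matrix Rademacher series bound, which yields $\BP(\lambda_{\text{max}}(K)\ge \varepsilon_0) \le d\exp(-\varepsilon_0^2/(2v))$, where the ambient dimension $d$ is exactly the prefactor appearing in the statement and $v \defeq \norm{\sum_t (q_t q_t\tr)^2}$ is the matrix variance statistic. The key computation is to control $v$ through \ref{assu:Phi_Q_coherence}. Since $(q_t q_t\tr)^2 = \norm{q_t}^2 q_t q_t\tr$, the matrix $\sum_t (q_t q_t\tr)^2 = \sum_t \norm{q_t}^2 q_t q_t\tr$ is positive semidefinite, so bounding its operator norm by its trace gives $v \le \sum_t \norm{q_t}^4 \le d\,\max_t\norm{q_t}^2$. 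By the definition of coherence and \ref{assu:Phi_Q_coherence}, $\max_t\norm{q_t}^2 = \frac{d}{n}\,\mu(\Phi_n) \le \frac{d}{n}\,\kappa\, n^{1-\rho} = \kappa d\, n^{-\rho}$, hence $v \le \kappa d^2 n^{-\rho}$. Substituting this bound and doubling for the two sides yields $\BP(\max_i\abs{\lambda_i(K)}\ge\varepsilon_0) \le 2d\exp(-\varepsilon_0^2 n^{\rho}/(2\kappa d^2))$, which is exactly the claim.

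I expect the only delicate points to be the choice of matrix concentration inequality with matching constants and the variance computation. The matrix Rademacher bound is the right tool because its exponent $-\varepsilon_0^2/(2v)$ and its dimensional prefactor $d$ line up precisely with the target. On the variance side, I would note that the \emph{sharper} estimate $v = \norm{\sum_t \norm{q_t}^2 q_t q_t\tr} \le \max_t\norm{q_t}^2\,\lambda_{\text{max}}(\sum_t q_t q_t\tr) = \max_t\norm{q_t}^2 \le \kappa d\, n^{-\rho}$ (again using $\sum_t q_t q_t\tr = I$) would even deliver the stronger constant $2\kappa d$ in the denominator; since $d\ge 1$, the stated $2\kappa d^2$ version follows a fortiori, and the cruder trace bound is what matches the stated form verbatim. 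The restriction $0<\varepsilon_0\le 1$ plays no role in this argument—the bound is valid for all $\varepsilon_0>0$—and is only imposed because, as shown around \eqref{equ:reformulation_A_2}, $\abs{\lambda_i(K)}\le 1$ always holds, so larger thresholds are uninformative and the regime $\varepsilon_0\le 1$ is the one consumed downstream.
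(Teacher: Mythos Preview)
Your argument is correct. The representation $K=\sum_{t=1}^n \alpha_t\, q_t q_t\tr$ with $q_t=\Phi_{Q,n}\tr e_t$, the application of the matrix Rademacher (Hoeffding) tail bound with variance statistic $v=\big\|\sum_t (q_t q_t\tr)^2\big\|$, and the control of $v$ through $\sum_t q_t q_t\tr=I$ together with \ref{assu:Phi_Q_coherence} are all valid, and the constants match the claim exactly.

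Regarding comparison with the paper: this paper does not give its own proof of Lemma~\ref{lemma:denominator_indicator}; it merely quotes the lemma from \cite{szentpeteri2025} and uses it as a black box. Your approach is the canonical one for random-sign matrix series and is almost certainly what the cited reference does as well. One minor remark: your ``sharper'' variance estimate $v\le \max_t\norm{q_t}^2\,\lambda_{\max}\!\big(\sum_t q_t q_t\tr\big)=\max_t\norm{q_t}^2\le \kappa d\,n^{-\rho}$ is indeed tighter by a factor of $d$ and would give $2\kappa d$ in the denominator; the stated form with $2\kappa d^2$ corresponds to the cruder trace bound, so the lemma as written is slightly suboptimal, exactly as you noted.
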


First, by \eqref{equ:def_bc}, the optimal value,
$\gamma_0^*$, can be reformulated as
\begin{align}
    \gamma_0^* &= \xi^* \left(b_0\tr \left(A'_0(\xi^*)\right)^\dagger b_0 - c_0\right) \notag\\
    & =\dfrac{\xi^*}{n}\left(y_n\tr \left[D_{\alpha,n}B\tr R_n^{-1}Q_n R_n^{-\frac{1}{2}}\left(A'_0(\xi^*)\right)^\dagger\right.\right.\\
    &\quad \left.\left.\cdot R_n^{-\frac{1}{2}}Q_n R_n^{-1}BD_{\alpha,n} + D_{\alpha,n}B\tr R_n^{-1}B D_{\alpha,n} \right]y_n\right).\notag
\end{align}
Since, by the definition of $B$, i.e.~\eqref{equ:def_B}, we have
\begin{align}\label{equ:y_to_w_rewrite}
    &BD_{\alpha,n}y_n = (\Phi_n\tr D_{\alpha,n} - Q_n R_n^{-1}\Phi_n\tr)y_n = \notag\\
    &(\Phi_n\tr D_{\alpha,n} - Q_n R_n^{-1}\Phi_n\tr)(\Phi_n\theta^* +w_n) =  \notag\\
    &(Q_n-Q_nR_n^{-1}R_n)\theta^* + (\Phi_n\tr D_{\alpha,n} - Q_n R_n^{-1}\Phi_n\tr)w_n= \notag\\
    &(\Phi_n\tr D_{\alpha,n} - Q_n R_n^{-1}\Phi_n\tr)w_n=BD_{\alpha,n}w_n,
\end{align}
the optimal value of 
problem \eqref{equ:sps_eoA_0vxopt} can be written as
\begin{align}
    &\gamma_0^* = \dfrac{w_n\tr \xi^*P(\xi^*)w_n}{n},
\end{align}
where symmetric matrix $P(\xi^*)$ is defined as 
\begin{align}\label{equ:P_defined}
    P(\xi^*) \defeq\, &D_{\alpha,n}B\tr \Big[R_n^{-1}Q_n R_n^{-\frac{1}{2}}\left(A'_0(\xi^*)\right)^\dagger R_n^{-\frac{1}{2}}Q_n R_n^{-1}\notag\\
    &+R_n^{-1}\Big]BD_{\alpha,n}.
\end{align}

As we mentioned earlier, we aim at giving an upper bound for $\gamma_0^*$ for which we could use the results of Lemma \ref{lemma:numerator_indicator} and Lemma \ref{lemma:denominator_indicator}, therefore in the following we express the eigenvalues of $\xi^*P(\xi^*)$ with the eigenvalues of $K$. 
From Lemma \ref{lemma:eig_P} it follows that in case $\forall\, i:0<\abs{\lambda_i(K)} < 1$ and $\xi^* >1/(1-\lambda_{\text{max}}(K^2))$, $\text{rank}(\xi^*P(\xi^*)) = d$ and the nonzero eigenvalues of $\xi^*P(\xi^*)$  can be written as
\begin{align}\label{equ:eigvals_P}
    \lambda_i(\xi^* P(\xi^*)) = \frac{\xi^*(\xi^*\lambda_i^2(K) - \xi^* - \lambda_i^2(K) + 1)}{\xi^*\lambda_i^2(K) - \xi^* + 1}.
\end{align}

An upper bound on the optimal value $\gamma_0^*$ can be given by using the eigendecomposition of $\xi^*P(\xi^*) = \xi^*V_{\scriptscriptstyle P(\xi^*)}\Lambda_{P(\xi^*)}V_{\scriptscriptstyle P(\xi^*)}\tr$, the fact that rank$\left(\xi^*P(\xi^*)\right) =d$ and introducing $X_{\scriptscriptstyle P} \defeq w_n\tr V_{\scriptscriptstyle P(\xi^*)}D_dV_{\scriptscriptstyle P(\xi^*)}\tr w_n$ as
\begin{align}
    \gamma_0^* &= \dfrac{w_n\tr \xi^*P(\xi^*)w_n}{n}\notag\\
    & \leq \dfrac{\xi^*\lambda_{\text{max}}(P(\xi^*))}{n}\left(w_n\tr V_{\scriptscriptstyle P(\xi^*)}D_dV_{\scriptscriptstyle P(\xi^*)}\tr w_n\right)\notag\\
    & \leq \dfrac{\xi'\lambda_{\text{max}}(P(\xi'))X_{\scriptscriptstyle P}}{n},
\end{align}
where $D_d \defeq \text{diag}(0,\dots,0,1,\dots,1)$ with exactly $d$ ones, and $\xi'$
minimizes $\xi\lambda_{\text{max}}(P(\xi))$ in the form $\xi = 1/(1-\lambda_{\text{max}}(K^2)) + \delta_{\xi}$, $\delta_{\xi} > 0$, such that the upper bound would be the infimum of all possible bounds that satisfies the constraint on $\xi$.

With this construction of $\xi'$ an upper bound for $\gamma_0^*$ can be derived that only depends on $\lambda_{\text{max}}(K^2)$ and not on $\xi'$ the following way. By using Lemma \ref{lemma:eig_P} and substituting $\xi' = 1/(1-\lambda_{\text{max}}(K^2)) + \delta'_{\xi}$ to $\xi\lambda_{\text{max}}(P(\xi))$ we get
\begin{align}
    \xi'\lambda_{\text{max}}(P(\xi')) &=\big[(\delta'_{\xi})^2(\lambda^2_{\text{max}}(K^2) - 2\lambda_{\text{max}}(K^2) + 1) \notag\\
    &\quad-\delta'_{\xi}(\lambda^2_{\text{max}}(K^2) + 1) + \lambda_{\text{max}}(K^2)\big]\notag\\
    &\quad/(\delta'_{\xi}(\lambda^2_{\text{max}}(K^2) - 2\lambda_{\text{max}}(K^2) + 1)).
\end{align}
Consider $\xi'\lambda_{\text{max}}(P(\xi'))$ as a function of $\delta'_{\xi}$ denoted by $f_P(\delta'_{\xi})$.
It holds for the second derivative of the function $f_P(\delta'_{\xi})$ that
\begin{align}
     f^{(2)}_P(\delta'_{\xi}) = \frac{\partial^2f_P(\delta'_{\xi})}{\partial(\delta'_{\xi})^2} = \frac{2 \lambda_{\text{max}}(K^2)}{(\delta'_{\xi})^3(\lambda_{\text{max}}(K^2)-1)^2} > 0,
\end{align}
since $\delta'_{\xi} > 0$ and $0<\abs{\lambda_i(K)}<1$, therefore the function is strictly convex.
By solving the equation
\begin{equation}
    \frac{\partial f_P(\delta'_{\xi})}{\partial\delta'_{\xi}} = 0,
\end{equation} 
it can be obtained that $\xi'\lambda_{\text{max}}(P(\xi'))$ takes its minimum at $ \delta'_{\xi} = \lambda^{1/2}_{\text{max}}(K^2)/(1-\lambda_{\text{max}}(K^2))$, therefore $\xi' = (1+\lambda^{1/2}_{\text{max}}(K^2))/(1-\lambda_{\text{max}}(K^2))$. Then, the optimal value of the program can be upper bounded by substituting the determined $\xi'$ to $\xi\lambda_{\text{max}}(P(\xi))$ using Lemma \ref{lemma:eig_P} as
\begin{align}
    \gamma_0^* &= \dfrac{\xi^*}{n}\left(w_n\tr P(\xi^*)w_n\right)
    \leq \dfrac{\xi'\lambda_{\text{max}}(P(\xi'))X_{\scriptscriptstyle P}}{n}\notag\\
    &\leq \frac{X_{\scriptscriptstyle P}(1+\lambda^{1/2}_{\text{max}}(K^2))}{n(1-\lambda^{1/2}_{\text{max}}(K^2))}.
\end{align}

The upper bound for $\gamma_0^*$ (bounded case) can be extended to $\gamma$ (unbounded case) by considering that $\lambda_{\text{max}}(K) = 1$ and $\gamma_0^* = $``$1/0$''$ = \infty$, therefore $\gamma^* = \infty$ as in \eqref{equ:gammas_def}.

{\em Step iv)} In the last step, we use the upper bound on $\gamma^*$ to derive an upper bound for the size of the confidence region. We then apply the results of Lemmas \ref{lemma:numerator_indicator} and \ref{lemma:denominator_indicator} to give a concentration inequality for the size of the region and generalize the high probability bound for arbitrary $m$ and $q$.

In the case of $m=2$, $q=1$ the confidence region generated by the SPS EOA can be rewritten as
\begin{align}
    &(\theta - \hat{\theta}_n)\tr  \bar{R}_n (\theta - \hat{\theta}_n) \leq \gamma^*,
\end{align}
which follows from Algorithm \ref{alg:sps_eoa}, since $r = \gamma^*$ for one perturbed sum. Using our upper bound on $\gamma^*$ it holds that
\begin{align}
    &\frac{X_{\scriptscriptstyle P}(1+\lambda^{1/2}_{\text{max}}(K^2))}{n(1-\lambda^{1/2}_{\text{max}}(K^2))}\geq (\theta - \hat{\theta}_n)\tr \bar{R}_n (\theta - \hat{\theta}_n) = \notag\\
    &\frac{1}{n} (\theta - \hat{\theta}_n)\tr R_n (\theta - \hat{\theta}_n) = \frac{1}{n} \norm{(\theta - \hat{\theta}_n)\tr R_n^{1/2}}^2\geq\notag\\
    &\frac{1}{n} \lambda_{\text{min}}(R_n)\|\theta - \hat{\theta}_n\|^2,
\end{align}
therefore
\begin{align}
    \|\theta - \hat{\theta}_n\|^2 &\leq \frac{X_{\scriptscriptstyle P}(1+\lambda^{1/2}_{\text{max}}(K^2))}{\lambda_{\text{min}}(R_n)(1-\lambda^{1/2}_{\text{max}}(K^2))}\notag\\
    & = \frac{\tfrac{1}{n}X_{\scriptscriptstyle P}(1+\lambda^{1/2}_{\text{max}}(K^2))}{\tfrac{1}{n}\lambda_{\text{min}}(R_n)(1-\lambda^{1/2}_{\text{max}}(K^2))}.
\end{align}
Using assumption \ref{assu:R_min_eigval} it can be written that
\begin{align}\label{equ:eoa_ellipsoid_upperbound}
    \|\theta - \hat{\theta}_n\|^2 \leq \frac{\tfrac{1}{n}X_{\scriptscriptstyle P}(1+\lambda^{1/2}_{\text{max}}(K^2))}{\lambda_0(1-\lambda^{1/2}_{\text{max}}(K^2))}.
\end{align}

In the following analysis we investigate the two terms $X_{\scriptscriptstyle P}/(n \lambda_0)$ and $(1+\lambda^{1/2}_{\text{max}}(K^2))/(1-\lambda^{1/2}_{\text{max}}(K^2))$ separately.
Notice that in $X_{\scriptscriptstyle P} = w_n\tr V_{\scriptscriptstyle P(\xi^*)}D_dV_{\scriptscriptstyle P(\xi^*)}\tr w_n$, the matrix $V_{\scriptscriptstyle P(\xi^*)}D_dV_{\scriptscriptstyle P(\xi^*)}\tr$ is a projection matrix with rank $d$, since $V_{\scriptscriptstyle P(\xi^*)}$ is an orthonormal (eigenvector) matrix and $D_d$ is a diagonal with $d$ values of 1 and $n-d$ values of 0 in its diagonal. As a consequence the result of Lemma \ref{lemma:numerator_indicator} can be applied to $X_{\scriptscriptstyle P}$
\begin{align}\label{equ:eoa_concentration_firstpart}
    \BP\left(\frac{\vert X_{\scriptscriptstyle P} - \mathbb{E}X_{\scriptscriptstyle P}\vert}{n\lambda_0} \geq \varepsilon\right) \leq 
    \begin{cases}
        2\exp(\frac{-\varepsilon^2n^2\lambda_0^2}{64d^2\sigma^4}) & \hspace{-0.5mm}0 \leq \varepsilon \leq 8\sigma^2 d^2\\[1mm]
        2\exp(\frac{-\varepsilon n\lambda_0}{8\sigma^2}) & \hspace{-0.5mm}\varepsilon > 8\sigma^2{d^2}.
    \end{cases}
\end{align}

In order to give a concentration inequality for the second term, we investigate the probability for every $\varepsilon > 1$,
\begin{align}
    \label{eq:ub_second_term}
    \hspace{-2mm}\BP\left(\frac{1+\lambda^{1/2}_{\text{max}}(K^2)}{1-\lambda^{1/2}_{\text{max}}(K^2)}\geq \varepsilon\right) = \BP\left(\max\abs{\lambda_i(K)} \geq \frac{\varepsilon -1}{\varepsilon +1} \right)\!.
\end{align}
An upper bound for \eqref{eq:ub_second_term}
can be given by using Lemma \ref{lemma:denominator_indicator} as
\vspace{-1mm}
\begin{equation}\label{equ:eoa_concentration_secpart}
\BP\left(\frac{1+\lambda^{1/2}_{\text{max}}(K^2)}{1-\lambda^{1/2}_{\text{max}}(K^2)}\geq \varepsilon\right) \leq 2d\exp\left(-\frac{n^{\rho}\left(\frac{\varepsilon -1}{\varepsilon +1}\right)^2}{2\kappa d^2}\right)\!.
\vspace{-2mm}
\end{equation}

The following lemma gives a concentration inequality for the size of the confidence region generated by the SPS outer approximation algorithm for $m=2$ and $q=1$, by combining the results of \eqref{equ:eoa_concentration_firstpart} and \eqref{equ:eoa_concentration_secpart}.
\begin{lemma}\label{lemma:sample_complex_eoa_m2_q1}
    Assuming \ref{assu:noise}-\ref{assu:Phi_Q_coherence}, the following concentration inequality holds for the sizes of the confidence regions generated by the SPS Outer Approximation algorithm (Algorithm \ref{alg:sps_eoa}) for $m=2$ and $q=1$, with probability at least $1-\delta:$
    \begin{align}
    &\sup_{\theta \in \outappr[0.5,n]}\|\theta - \hat{\theta}_n\| \leq 
        \dfrac{f(\delta)(n^{\frac{\rho}{2}}+g^{\frac{1}{2}}(\delta))^{\frac{1}{2}}}{\left(n\lambda_0(n^{\frac{\rho}{2}}-g^{\frac{1}{2}}(\delta))\right)^{\frac{1}{2}}}.
    \end{align}
\end{lemma}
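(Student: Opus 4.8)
The plan is to combine the two already-established concentration inequalities \eqref{equ:eoa_concentration_firstpart} and \eqref{equ:eoa_concentration_secpart} via a union bound, and then invert them to obtain the stated high-probability upper bound on $\|\theta - \hat{\theta}_n\|$. The starting point is the deterministic bound \eqref{equ:eoa_ellipsoid_upperbound}, which expresses the squared size of the region as a product of two random factors, $\tfrac{1}{n}X_{\scriptscriptstyle P}/\lambda_0$ and $(1+\lambda^{1/2}_{\text{max}}(K^2))/(1-\lambda^{1/2}_{\text{max}}(K^2))$. Since the final bound must fail with probability at most $\delta$, the natural strategy is to split the budget, allocating $\delta/2$ to each factor, and control each term on its own good event.

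First I would handle the factor involving $X_{\scriptscriptstyle P}$. Writing $\tfrac{1}{n}X_{\scriptscriptstyle P}/\lambda_0 = \mathbb{E}X_{\scriptscriptstyle P}/(n\lambda_0) + (X_{\scriptscriptstyle P} - \mathbb{E}X_{\scriptscriptstyle P})/(n\lambda_0)$, I would bound the expectation term using the subgaussian assumption (each $W_t$ has $\mathbb{E}[W_t^2] \leq \sigma^2$ and the projection has rank $d$, so $\mathbb{E}X_{\scriptscriptstyle P} \leq d\sigma^2$), and control the deviation term by inverting \eqref{equ:eoa_concentration_firstpart}. Setting the right-hand side of \eqref{equ:eoa_concentration_firstpart} equal to $\delta/2$ and solving for $\varepsilon$ gives the two-regime expression that matches the definition of $f(\delta)$ in \eqref{equ:f_and_g_not_appendix}: in the small-$\delta$ regime the linear tail dominates and yields the $\ln(4/\delta)$ term, whereas in the complementary regime the Gaussian tail gives the $\ln^{1/2}(4/\delta)$ term. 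After combining the expectation bound with the deviation bound, this factor is controlled by $f^2(\delta)/(n\lambda_0)$ up to the appropriate constant, on an event of probability at least $1-\delta/2$.

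Next I would invert \eqref{equ:eoa_concentration_secpart}: setting its right-hand side equal to $\delta/2$ and solving for the threshold $\varepsilon$ produces $\varepsilon = (n^{\rho/2}+g^{1/2}(\delta))/(n^{\rho/2}-g^{1/2}(\delta))$, where $g(\delta) = 2\kappa d^2\ln(4d/\delta)$ is exactly the quantity appearing in \eqref{equ:f_and_g_not_appendix}. This is where the sample-size constraint $n \geq \lceil g^{1/\rho}(\delta)\rceil$ enters: it guarantees $n^{\rho/2} > g^{1/2}(\delta)$, so the denominator is strictly positive and the bound is well-defined (and, in particular, finite, which rules out the unbounded-region case on this event). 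On the intersection of the two good events, a union bound gives total failure probability at most $\delta$, and multiplying the two controlled factors in \eqref{equ:eoa_ellipsoid_upperbound} followed by taking square roots yields the claimed bound $f(\delta)(n^{\rho/2}+g^{1/2}(\delta))^{1/2}/(n\lambda_0(n^{\rho/2}-g^{1/2}(\delta)))^{1/2}$.

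I expect the main obstacle to be the careful bookkeeping in inverting \eqref{equ:eoa_concentration_firstpart} so that the resulting constant and the two-regime structure align \emph{exactly} with the prescribed $f(\delta)$, rather than merely up to unspecified constants; in particular, one must track the additive $d\sigma^2$ contribution from $\mathbb{E}X_{\scriptscriptstyle P}$ and verify that it can be absorbed into the $+d$ and $+\,8\ln(\tfrac{4}{\delta})$ terms inside the two branches of $f$. The algebraic manipulation converting the tail threshold $\varepsilon$ on $\max_i|\lambda_i(K)|$ back into the ratio form is routine but error-prone, so I would double-check the monotonicity of $x \mapsto (1+x)/(1-x)$ to confirm that the event $\{(1+\lambda_{\text{max}}^{1/2}(K^2))/(1-\lambda_{\text{max}}^{1/2}(K^2)) \geq \varepsilon\}$ indeed corresponds to $\{\max_i|\lambda_i(K)| \geq (\varepsilon-1)/(\varepsilon+1)\}$ as asserted in \eqref{eq:ub_second_term}.
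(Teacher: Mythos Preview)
Your proposal is correct and follows essentially the same route as the paper: split the failure probability into $\delta/2$ for each factor, invert \eqref{equ:eoa_concentration_firstpart} (adding the expectation bound $\mathbb{E}X_{\scriptscriptstyle P}\leq d\sigma^2$ via the reverse triangle inequality) to obtain the $f(\delta)$ term, invert \eqref{equ:eoa_concentration_secpart} to obtain the $(n^{\rho/2}+g^{1/2}(\delta))/(n^{\rho/2}-g^{1/2}(\delta))$ term, and then combine the two via a union bound and \eqref{equ:eoa_ellipsoid_upperbound}. Your anticipated obstacles (the two-regime bookkeeping for $f$ and the monotonicity of $x\mapsto(1+x)/(1-x)$) are exactly the points the paper handles explicitly, so nothing is missing.
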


The proof of Lemma \ref{lemma:sample_complex_eoa_m2_q1} is built upon the same ideas as the
proof of \cite[Lemma 4]{szentpeteri2025}, but since our second term is different, we present it in Appendix \ref{sec:proof_sample_complex_eoa_m2_q1}, for the sake of completeness.

To complete the proof, we consider the general case, i.e., we allow arbitrary $m > q > 0$ (integer) choices. 
From the construction, see Algorithm \ref{alg:sps_eoa}, it is clear that we would like to bound the probability of the ``bad'' event $\CB$ that for at least $m-q$ ellipsoids from $\{\outappr[0.5,n]^i\}_{i=1}^{m-1}$ the bound does not hold, where $\outappr[0.5,n]^i\, \defeq \bigl\{\hspace{0.3mm}\theta \in \mathbb{R}^d : (\theta - \hat{\theta}_n)\tr  {\bar{R}_n} (\theta - \hat{\theta}_n) \leq \gamma_i^*\hspace{0.3mm}\bigl\}$.

Let us fix an index $i\in [m-1],$ and introduce the ``bad'' event $\CB_i$ that the $i$-th ellipsoid is larger than our bound, i.e., \vspace{-1mm}
\begin{equation}
    \CB_i \doteq \left\{ \sup_{\theta_i \in \outappr[0.5,n]^i}
    \hspace{-3mm}
    \|\theta_i - \hat{\theta}_n\| >\!
        \dfrac{f(\delta)(n^{\frac{\rho}{2}}+g^{\frac{1}{2}}(\delta))^{\frac{1}{2}}}{\left(n\lambda_0(n^{\frac{\rho}{2}}-g^{\frac{1}{2}}(\delta))\right)^{\frac{1}{2}}}\right\}.
\end{equation}
From Lemma \ref{lemma:sample_complex_eoa_m2_q1}, we know that $\BP(\CB_i) \leq \delta$, for all $i$. Then, by using a similar argument as in the proof of \cite[Theorem 3]{szentpeteri2023scalar}, we can write the 
overall bad event, i.e., that there are at least $m-q$ ellipsoids for which our bound does not hold, as
\vspace{-1mm}
\begin{equation}
\CB\, = \hspace{-1mm} \bigcap_{\substack{I \subseteq [m-1],\\ |I| \geq m-q}}\hspace{1mm} \bigcup_{i \in I} \;\CB_i \,=\hspace{-1mm} \bigcap_{\substack{I \subseteq [m-1],\\ |I| = m-q}}\hspace{1mm} \bigcup_{i \in I} \;\CB_i.
\end{equation}
By using that $\BP(A \cap B) \leq \min\{\BP(A),\, \BP(B)\}$ and $\BP(A \cup B) \leq \BP(A) + \BP(B)$, 
the probability of $\CB$ can be bounded by
\vspace{-1mm}
\begin{equation}
\BP(\CB)\, \leq \min_{\substack{I \subseteq [m-1],\\ |I| = m-q}}\hspace{0mm} \BP\!\left(\,\bigcup_{i \in I}\,\CB_i \right) \, \leq\, (m-q)\cdot \BP(\CB_1).
\end{equation}
Finally, by introducing $\delta' = (m-q)\delta$, we get
\vspace{-1mm}
    \begin{align}
    \sup_{\theta \in \outappr[p,n]}\|\theta - \hat{\theta}_n\| &\leq
        \dfrac{f(\delta)(n^{\frac{\rho}{2}}+g^{\frac{1}{2}}(\delta))^{\frac{1}{2}}}{\left(n\lambda_0(n^{\frac{\rho}{2}}-g^{\frac{1}{2}}(\delta))\right)^{\frac{1}{2}}}\notag\\
        &=\dfrac{f(\frac{\delta'}{m-q})(n^{\frac{\rho}{2}}+g^{\frac{1}{2}}(\frac{\delta'}{m-q}))^{\frac{1}{2}}}{\left(n\lambda_0(n^{\frac{\rho}{2}}-g^{\frac{1}{2}}(\frac{\delta'}{m-q}))\right)^{\frac{1}{2}}},
    \end{align}
with probability at least $1-\delta',$ which completes the proof.
\end{proof}

\section{Experimental Results}\label{sec:experiments}

{\newtext In this section we illustrate the sizes of the SPS ellipsoidal outer approximation (EOA) confidence regions
via a series of numerical experiments. We consider the problem of robust parameter (coefficient) estimation of an FIR system, i.e., when a confidence ellipsoid is also provided for the given estimate. This problem is a fundamental task in many practical signal processing applications, such as channel estimation \cite{miao2007signal}.

We compare the sizes of the SPS EOA regions with our theoretical bound given in Theorem \ref{thm:sample_complex_eoa}, with the sizes of confidence ellipsoids based on the classical asymptotic \cite{Ljung1999} and novel finite sample PAC theory \cite{Djehiche2021}, furthermore, with set membership ellipsoids \cite{milanese2013bounding}. The asymptotic confidence regions are derived from the central limit theorem. For zero mean independent and identically distributed (i.i.d.)
noise with a finite variance the LSE is asymptotically Gaussian for bounded regressors.
As a consequence, the approximate (asymptotic) confidence regions can be constructed as
\begin{align}\label{equ:asym_conf_region}
    \tilde{\mathcal{C}}_n \defeq \left\{\theta \in \BR^d:(\theta - \hat{\theta}_n)\tr \bar{R}_n (\theta - \hat{\theta}_n) \leq \frac{\mu\hat{\sigma}_n^2}{n}\right\},
\end{align}
where\vspace{-2mm}
\begin{equation}
    \hat{\sigma}^2_n \defeq \frac{1}{n-d} \sum_{i=1}^n \varepsilon_i^2(\hat{\theta}_n),
\end{equation}
and $\varepsilon(\theta)$ is given in Algorithm \ref{alg:sps_indicator}.
The probability that the true parameter $\theta^*$ is in the confidence region $\tilde{\mathcal{C}}_n$ is approximately $F_{\chi^2(d)}(\mu)$, where $F_{\chi^2(d)}$ is the cumulative distribution function of the $\chi^2$ distribution with $d$ degrees of freedom. Then, we have $\mathbb{P}(\theta^* \in \tilde{\mathcal{C}}_n) \approx p$. We applied \cite[Theorem 3.1]{Djehiche2021} to obtain a finite sample PAC bound for the LSE error.
In \cite{Djehiche2021} a different (but equivalent) characterization of a subgaussian variable is used. We will
refer to the following result as the DMR-bound.
\begin{theorem}\label{thm:fir_lse_pac}
    Let $\varphi_1,\dots,\varphi_n$ be the time shifted regressors of a linear
    FIR system with independent centered $\sigma_{\varphi}$-subgaussian entries of unit variance and assume that the noises 
    $\{W_t\}_{t=1}^n$ are $\sigma_{w}$-subgaussian. Then, there is an absolute constant $C>0$ such that, for all $\nu \in (0,1)$, $\eta \in (0,2e^{-1}]$ and as long as
    \begin{align}\label{equ:fir_lse_pac_n_lower}
        n\geq C(\sigma_{\varphi}^2 \vee \sigma_{\varphi}^4)\frac{d \log(d \vee \eta^{-1})}{\nu^2},
    \end{align}
    the LS estimate $\hat{\theta}_n$ satisfies with probability at least $1-\eta$:
    \begin{align}
        \|\hat{\theta}_n - \theta^*\|^2 \leq \frac{2}{(1-\nu)^2}\frac{d}{n}(1+C\sigma_{\varphi}^2\sigma_{w}^2\log^2(2/\eta)).
    \end{align}
\end{theorem}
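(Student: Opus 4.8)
The plan is to start from the closed-form least-squares error and decouple the two sources of randomness. Since $y_n = \Phi_n\theta^* + w_n$ and $R_n = \Phi_n\tr\Phi_n$ is invertible on the relevant event, one has $\hat{\theta}_n - \theta^* = R_n^{-1}\Phi_n\tr w_n$, whence
\begin{equation*}
\norm{\hat{\theta}_n - \theta^*}^2 \,\leq\, \frac{1}{\lambda_{\text{min}}^2(R_n)}\,\norm{\Phi_n\tr w_n}^2 .
\end{equation*}
The two factors are then controlled on separate high-probability events, each of failure probability at most $\eta/2$, and combined by a union bound. The squaring of the operator-norm bound $\norm{R_n^{-1}} = 1/\lambda_{\text{min}}(R_n)$ is what will produce the $(1-\nu)^2$ in the denominator of the claimed inequality.

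First I would lower-bound the conditioning factor $\lambda_{\text{min}}(R_n)$. Because the regressor entries are independent, centered, unit-variance and $\sigma_\varphi$-subgaussian, $\BE[R_n] = nI$ and $R_n = \sum_{t=1}^n \varphi_t\varphi_t\tr$ is a sum of independent subexponential rank-one matrices. A standard subgaussian covariance-estimation result (matrix Bernstein / Rudelson--Vershynin) gives $\norm{\bar{R}_n - I} \leq \nu$ with probability at least $1-\eta/2$, precisely once $n \gtrsim (\sigma_\varphi^2 \vee \sigma_\varphi^4)\,d\log(d\vee\eta^{-1})/\nu^2$; this is exactly the sample-size threshold \eqref{equ:fir_lse_pac_n_lower}, and matching it to what the concentration bound demands is how the hypothesis is consumed. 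On this event $\lambda_{\text{min}}(R_n) \geq (1-\nu)\,n$.

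Next I would upper-bound the noise--regressor correlation $\norm{\Phi_n\tr w_n}$. The clean route is to condition on $\Phi_n$: then $\norm{\Phi_n\tr w_n}^2 = w_n\tr\Phi_n\Phi_n\tr w_n$ is a quadratic form in the subgaussian vector $w_n$ with a rank-$d$ matrix whose trace is $\approx nd$, so Hanson--Wright concentrates it around order $\sigma_w^2 nd$. Dividing by $\lambda_{\text{min}}^2(R_n) \geq (1-\nu)^2 n^2$ already recovers the $d/n$ rate. To reproduce the precise packaging of the statement, however, one should instead treat $\Phi_n\tr w_n$ jointly as a bilinear chaos $\sum_t (\varphi_t\tr u)\,W_t$ in the two independent subgaussian families and apply a subexponential Bernstein inequality (each summand being a product of independent subgaussians, hence subexponential with parameter of order $\sigma_\varphi\sigma_w$), taking a supremum over $u$ on the sphere via a covering net. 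The Gaussian regime yields the leading term, while the heavy-tail (linear) branch of the Bernstein tail produces the $\log^2(2/\eta)$ factor, and the joint treatment is exactly what introduces the product $\sigma_\varphi^2\sigma_w^2$; this gives, with probability at least $1-\eta/2$, a bound of order $nd\,\big(1 + C\sigma_\varphi^2\sigma_w^2\log^2(2/\eta)\big)$.

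Combining the two events by a union bound (failure probability $\eta$) and dividing the noise bound by $\lambda_{\text{min}}^2(R_n) \geq (1-\nu)^2 n^2$ yields the claimed inequality, with the leading $d/n$ term and the $2/(1-\nu)^2$ prefactor emerging once the Gaussian-regime contribution is tracked carefully. I expect the main obstacle to be the coupling between $R_n^{-1}$ and $\Phi_n\tr w_n$, both of which depend on $\Phi_n$: independence of $\{\varphi_t\}$ and $\{W_t\}$ lets one condition on the regressors and handle the two factors separately, but obtaining the \emph{specific} constant dependence $\sigma_\varphi^2\sigma_w^2$ and the standalone constant in the stated form forces the bilinear-chaos treatment, where controlling the net and the regime split while keeping the constants matched to the sample-size threshold is the delicate part.
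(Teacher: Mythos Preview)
The paper does not contain a proof of this theorem. It is quoted verbatim from another source: the text immediately preceding the statement says ``We applied \cite[Theorem 3.1]{Djehiche2021} to obtain a finite sample PAC bound for the LSE error \ldots\ We will refer to the following result as the DMR-bound,'' and the theorem is then stated purely for the purpose of experimental comparison in Section~\ref{sec:experiments}. There is therefore no in-paper proof to compare your proposal against.

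On the substance of your sketch: the overall two-step strategy (lower-bound $\lambda_{\text{min}}(R_n)$ via matrix concentration, upper-bound $\norm{\Phi_n\tr w_n}$ via a quadratic/bilinear tail bound, combine by union bound) is the standard route and is essentially what the cited reference does. One point you gloss over is the dependence structure specific to FIR regressors: the statement says the \emph{entries} of the $\varphi_t$ are independent, but the regressors $\varphi_t = (U_{t-1},\dots,U_{t-d})\tr$ themselves are \emph{not} independent across $t$, since consecutive ones share $d-1$ components. Hence $R_n = \sum_t \varphi_t\varphi_t\tr$ is not a sum of independent rank-one terms, and off-the-shelf matrix Bernstein or Rudelson--Vershynin bounds do not apply directly; likewise the summands $(\varphi_t\tr u)W_t$ in your bilinear chaos are not independent across $t$. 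The cited proof handles this Toeplitz-type dependence explicitly, which is where most of the work lies. Your sketch would go through cleanly for i.i.d.\ regressors but needs this extra ingredient for the FIR case as stated.
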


Our set membership ellipsoid implementation is based on the well-established algorithm 
of \cite{Deller1989}, which recursively builds the confidence ellipsoid around a weighted 
LSE and only updates it when the available data is informative, i.e., 
when a smaller global set membership ellipsoid can be built.
\subsection{Sample complexity for bounded noise}\label{sec:experiment_sc_bounded}

\begin{figure}[!t]
\vspace{1mm}
\centering
\includegraphics[width=3.3in]{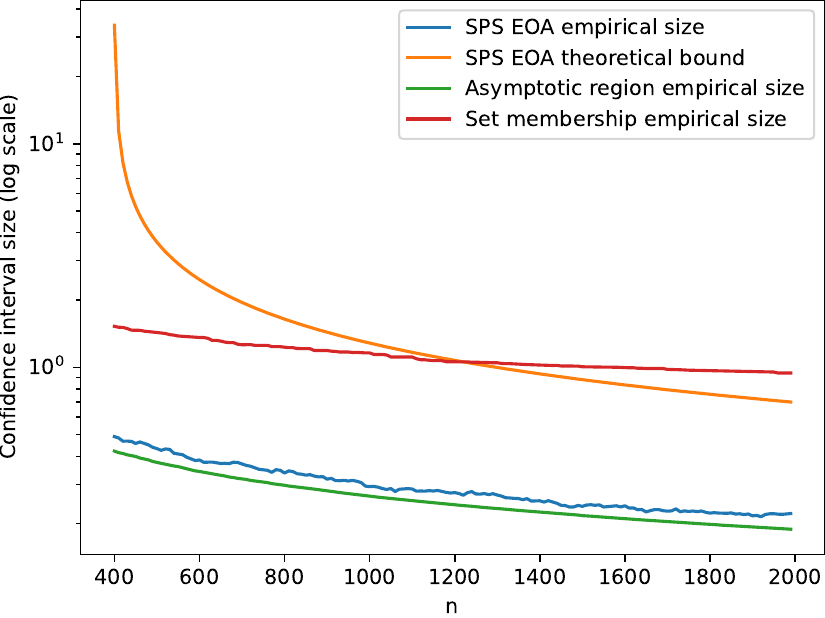}
\caption{{\newtext Size comparison of empirical $0.9$-level SPS ellipsoidal outer approximation, asymptotic and set membership confidence ellipsoids with the theoretical bound of Theorem \ref{thm:sample_complex_eoa} for a uniform noise distribution with $m=10$, $q=1$, $\delta=0.5$, $t_0=400$, $n=2000$ and $s =100$.}}
\label{fig:sa_experiment_unif}
\vspace{-2mm}
\end{figure}

We consider a $2$-dimensional FIR system
\begin{align}\label{equ:fir_exp}
    Y_t = b_1^*U_{t-1} + b_2^*U_{t-2} + W_t,
\end{align}
where $b_1^*=b_2^*=5$. In the first experiment we chose a bounded noise sequence to satisfy the assumption of set membership ellipsoids, hence, $\{W_t\}$ were i.i.d. uniform random variables with zero mean and boundaries $(-3, 3)$. Consequently, the noises were symmetric, nonatomic and $\sigma$-subgaussian with $\sigma=\sqrt{3}$, which is the optimal variance proxy. The input process $\{U_t\}$ was autoregressive, given by
\begin{equation}
    U_t = aU_{t-1} + \sum_{i=1}^5c_iV_{t-i+1},
\end{equation}
with $a=0.7$, $c_1=1$, $c_2=0.775$, $c_3=0.55$, $c_4=0.325$, $c_5=0.1$, and $V_t$ is a
Gaussian white random process with distribution $\mathcal{N}(0, 1)$.
From \eqref{equ:fir_exp}, the linear regression problem can be constructed as $\theta^* = [b_1^*,b_2^*]\tr$ and $\varphi_t = [U_{t-1},U_{t-2}]\tr$.

The SPS EOA confidence regions were generated with hyper-parameters $m = 10$ and $q=1$, i.e., their confidence level was $0.9$.
The parameters that are needed for the sizes of SPS EOA theoretical bounds were computed as follows: $\kappa$ was the largest empirical value of $\frac{t}{d} \max_{1\leq i \leq t}\norm{\Phi_{Q,t}\tr e_i}^2$, while $\lambda_0$ was the smallest empirical eigenvalue of $\bar{R}_{t}$ over all $t_0\leq t \leq n$ and all simulated trajectories. 
Notice that a consequence of setting the value of $\kappa$ this way is that $\rho=1$.
The empirical size of the SPS EOA region was computed as $2\sqrt{\gamma_t^*/\lambda_\text{min}(\bar{R}_{t})}$ for every $t_0\leq t \leq n$, since it corresponds to the longest axis of the ellipsoid $\sup_{\theta \in \outappr[p,n]}2\hspace{0.3mm}\|\theta - \hat{\theta}_n\|$, see \eqref{equ:SPS_EOA_0onf}. The sizes of the asymptotic and set membership confidence ellipsoids were computed accordingly.

We set the sample size to $n = 2000$ and repeated the confidence region constructions for $s = 100$ independently simulated trajectories. The difference between the empirical sizes of SPS EOA, set membership and asymptotic confidence regions, furthermore, the SPS EOA theoretical bounds with confidence level $0.5$ ($\delta = 0.5$) are shown in Fig. \ref{fig:sa_experiment_unif}. To reach the desired confidence level, the median was computed for the empirical sizes in each iteration from $s=100$ trials.

The results of this numerical experiment illustrate that our theoretical bound capture well the empirical decrease rate of the SPS EOA confidence regions. The results also indicate that our theoretical bounds are a bit conservative. This, however, is an expected phenomenon, since these empirical regions are built (a posteriori) with data-driven algorithms, while the theoretical bounds are calculated (a priori) from concentration inequalities. The conservatism of the set membership region can be observed from this experiment, as well, even our theoretical bound has a smaller size for larger sample sizes. Also note that, while the asymptotic region has the smallest size, it lacks the strict theoretical guarantees that the other confidence sets provide, since it is only an approximate region.

\begin{figure}[!t]
\vspace{1mm}
\centering
\includegraphics[width=3.3in]{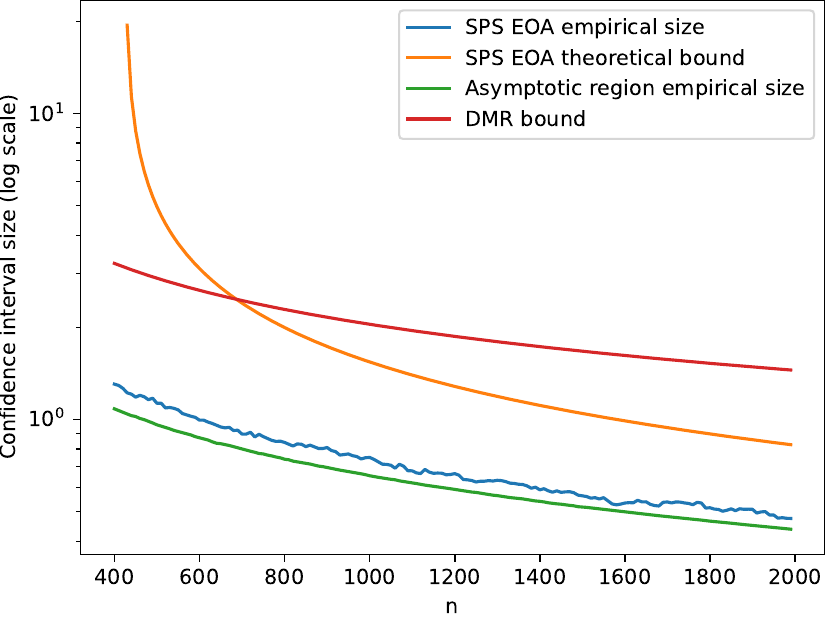}
\caption{{\newtext Comparison of the $0.9$-level empirical and theoretical sizes of SPS EOA regions with asymptotic and DMR bound based ellipsoids for a non-stationary 
Gaussian mixture noise distribution
with $m=10$, $q=1$, $\delta=0.5$, $t_0=400$, $n=2000$ and $s =100$.}}
\label{fig:sa_experiment_nonstatmultimodnorm}
\vspace{-2mm}
\end{figure}

\subsection{Sample complexity for unbounded noise}\label{sec:experiment_sc_unbounded}

In a second experiment we compared our theoretical bound and the empirical sizes of SPS EOA and asymptotic regions with the finite sample DMR bound from \cite[Theorem 3.1]{Djehiche2021} given in Theorem \ref{thm:fir_lse_pac}. We considered a 2-dimensional FIR system with the same structure as in \eqref{equ:fir_exp}, but with a symmetric, non-stationary
Gaussian mixture model
given as
\begin{align}\label{equ:exp_nonstatnoise}
    W_t = \sum_{i=1}^5\mathbb{I}(\zeta=i)Z_{i,t},
\end{align}
where $\zeta \in \{1,\dots,5\}$ has a discrete probability distribution: $\BP(\zeta=1) = 0.3$, $\BP(\zeta=2) = \BP(\zeta=3) = 0.2$, $\BP(\zeta=4) = \BP(\zeta=5) = 0.15$, and $Z_{1,t} \sim \mathcal{N}(0, (n-t)/n + 1)$, $Z_{2,t} \sim \mathcal{N}(-2, 3(n-t)/n)$, $Z_{3,t} \sim \mathcal{N}(2, 3(n-t)/n)$, $Z_{4,t} \sim \mathcal{N}(-5, 2(n-t)/n+1)$, $Z_{5,t} \sim \mathcal{N}(5, 2(n-t)/n+1)$. This type of multimodal noise arises in various signal processing applications where distortions from different sources are modeled. Notice that the noise sequence is independent and symmetric, thus it satisfies the noise assumption on the SPS EOA construction, furthermore, it is $\sigma$-subgaussian with $\sigma=\sqrt{2.7}$. As the distribution is not identically distributed the asymptotic region can be only used as an approximation again. We also changed the input sequence w.r.t. the experiment in section \ref{sec:experiment_sc_bounded} and removed the feedback from it to satisfy the assumption of Theorem \ref{thm:fir_lse_pac}, i.e., the input sequence was $U_t = \sum_{i=1}^5c_iV_{t-i+1}$ with the same parameters as before. From this setting it follows that $\sigma_{\varphi} = (\sum_{i=1}^5c_i^2)^{1/2}$, and we choose the largest $C$ that satisfies \eqref{equ:fir_lse_pac_n_lower} with $\nu = 0.5$. 
Note that \cite{Djehiche2021} does not specify the value of $C$ or how to chose the hyper-parameters of the bound, which makes it challenging to apply the DMR bound in practice.

We again considered $0.9$-level confidence regions, thus we set $m=10$, $q=1$ and $\eta = q/m$. The SPS EOA theoretical bound and the other confidence ellipsoids were computed the same way as before.
Fig. \ref{fig:sa_experiment_nonstatmultimodnorm} illustrates the difference between the empirical sizes of SPS EOA and asymptotic confidence regions with the theoretical DMR and SPS EOA size bounds.

It can be observed that initially (for small sample sizes) our bound on the SPS EOA sizes, given by Theorem \ref{thm:sample_complex_eoa}, is more conservative than the DMR bound of Theorem \ref{thm:fir_lse_pac}. However, our bound decreases much faster as the sample size increases. This also shows that our bound is less conservative than another concentration inequality based uncertainty quantification method when the validity constraint ($n \geq \lceil g^{1/\rho}(\frac{\delta}{m-q}) \rceil$) holds with a reasonable margin. Furthermore, it provides a theoretical guarantee for an algorithm that constructs much smaller confidence ellipsoids (here shown empirically) than the DMR bound.
We 
emphasize again that although the asymptotic region is the smallest, its guarantees are only asymptotic, it can be over-optimistic.

\subsection{Higher-order FIR systems}

{\renewcommand{\arraystretch}{1.4}
\begin{table}[!t]
\newtext
\caption{Comparison of $0.5$-level confidence region sizes for higher-order FIR systems with $m=2$, $q=1$, $\delta=0.5$ and $s =100$.\label{tab:conf_size_compare}}
\centering
\begin{tabular}{cc||cccc}
\hline
d & n & SPS EOA & SPS EOA bound & DMR & Asymptotic\\
\hline
4 & 1000 & 0.79 & 3.67 & 1.93 & 0.76\\
4 & 2000 & 0.55 & 1.47 & 1.92 & 0.53\\
6 & 2000 & 0.73 & 4.82 & 1.70 & 0.69\\
6 & 3500 & 0.53 & 1.61 & 1.69 & 0.51\\
8 & 3500 & 0.61 & 2.80 & 1.59 & 0.62\\
8 & 6000 & 0.45 & 1.33 & 1.58 & 0.44\\
10 & 6000 & 0.53 & 1.81 & 1.50 & 0.51\\
\hline
\end{tabular}
\end{table}}

An experiment was also performed to investigate the sizes of SPS EOA, DMR-bound-based, and asymptotic confidence ellipsoids in higher-order FIR systems. Specifically, we studied
\begin{align}\label{equ:high_dim_fir}
    Y_t = \sum_{k=1}^{d} b_k^*U_{t-k} + W_t.
\end{align}
We considered $\forall \, k : b_k = 5$, the same input sequence $U_t = \sum_{i=1}^5c_iV_{t-i+1}$ and a non-stationary 
Gaussian mixture noise sequence $\{W_t\}$, given in \eqref{equ:exp_nonstatnoise}, as in the previous section. We generated the confidence ellipsoids with confidence level $0.5$, thus we set $m=2$ and $q=1$. The other parameters and the confidence ellipsoid sizes were computed as in Section \ref{sec:experiment_sc_unbounded}. The median of the empirical sizes from $s=100$ independent 
trials and the theoretical bounds 
for $\delta=0.5$ are presented in Table \ref{tab:conf_size_compare} for various sample sizes $n$ and dimensions $d$.

The results show similarity to the two dimensional case. Our theoretical bound provide more conservative sizes than the DMR bound when the sample size $n$ is close to the validity constraint ($n \geq \lceil g^{1/\rho}(\frac{\delta}{m-q}) \rceil$), however, it can be seen that for a larger sample size, given the same dimension $d$, our bound is less conservative, i.e., it decreases faster. Table \ref{tab:conf_size_compare} also shows that the empirical sizes of SPS EOA regions is much smaller than of the DMR bounds and roughly the same as the asymptotic ones.
As a result, 
the order of the FIR system  (its dimension)
does not influence significantly the relative sizes of the regions or the bounds when compared to each other.

\subsection{Computational Complexity}

{
\renewcommand{\arraystretch}{1.4}
\begin{table}[!t]
\newtext
\caption{Comparison of the time required to solve the convex program relative to the $2$-dimensional case for $m=10$.\label{tab:comp_complex}}
\centering
\begin{tabular}{c||cccc}
\hline
\diagbox[]{d}{n} & 250 & 500 & 750 & 1000\\
\hline
\hline
10 & 7.86 & 7.05 & 6.72 & 3.15\\
20 & 32.96 & 28.65 & 24.79 & 11.87\\
30 & 152.04 & 90.44 & 88.21 & 39.54\\
40 & 687.97 & 216.13 & 193.33 & 89.22\\
\hline
\end{tabular}
\end{table}
}

In our last 
experiment the computational complexity of calculating the SPS EOA confidence regions was studied. As one of the main advantages of SPS EOA regions is that they have rigorous guarantees even when there is just a limited number of measurements, we investigated the computational complexity for small sample sizes and relatively high dimensions. The main parameter that influences the computational time of the SPS EOA algorithm is the desired confidence level, since the convex program \eqref{equ:sps_eoA_0vxopt} has to be solved $m$ times. The time required to solve the convex programs in higher-dimensional problems relative to the $2$-dimensional case for different sample sizes and $m=10$ are shown in Table \ref{tab:comp_complex}.

It can be observed that when the dimension of the problem is relatively high compared to the number of observations, the time to solve the resulting SDP grows linearly. Nevertheless, the SPS EOA confidence regions can be computed efficiently on modern hardware, since the two-dimensional variant of the optimization problem can be solved in milliseconds.
Furthermore, the results show that for relatively large $n/d$ ratios the convex problems can be solved considerably faster than for low ratios, showing the efficiency of the construction.
}

{\newtext
\section{Discussion}
Confidence regions, typically based on concentration inequalities, can provide high probability bounds on the estimation errors and  have become one of the main tools for uncertainty quantification over the past decades. This field of study is important both in statistical learning and in engineering applications, where robustness is critical, such as adaptive control and filtering. Here, we focused on robust estimation of exogenous linear systems, including FIR models, which play a vital role in signal processing across various domains, e.g., communications, biomedical engineering, and radar systems.

One of the key advancement of this work is that we rigorously prove concentration inequality based high probability upper bounds on the sizes of SPS confidence ellipsoids, which are based on the solutions of convex semidefinite programs. The main advantages of SPS compared to the aforementioned methods is that it can build confidence regions without any specific distribution or moment assumptions in a data-driven fashion, i.e., without the need of any hyper-parameters. Our analysis provides rigorous guarantees for the sizes of these SPS ellipsoids, which can be compared to the result of \cite{Djehiche2021}. Although it considers centered random regressor sequences, while our bound assumes deterministic regressors, the two bounds have the same decrease rate as the sample size increases and their dependence on the dimension and noise parameter is also similar. However, the DMR bound depends on hyper-parameters that are typically unknown in practice. Moreover, as we showed in our experiment for a robust FIR parameter estimation problem and a Gaussian 
input process, our bound gets less conservative than the DMR bound of \cite{Djehiche2021} as the sample size increases. Furthermore, the empirical sizes of SPS EOA regions are much smaller than the DMR bound.

Therefore, as our 
finite sample PAC bound and the experimental evaluations demonstrate, SPS EOA has several advantages compared to concentration inequality based uncertainty quantification methods, such as a data-driven construction, smaller theoretical and significantly smaller empirical sizes. 

One of the main assumptions behind SPS is that the noise terms are independent and symmetric about zero. A trade-off regarding this assumption can be made by using the Residual-Permuted Sums (RPS) algorithm \cite{szentpeteri2024rps}, which builds confidence regions assuming that the noise terms are independent and identically distributed. A future research direction could be to analyze the sample complexity of RPS based confidence ellipsoids and compare them to the bounds of Theorems \ref{thm:sample_complex_eoa} and \ref{thm:fir_lse_pac}. 
An instrumental variable based extension of SPS to state-space models can be found in \cite{szentpeteri2023}, while \cite{Care2025} proposed a generalization for ARX models.
Another possible research direction could be to investigate the sizes of SPS confidence ellipsoids in case of closed-loop state space models or ARX systems and compare them with the results of \cite{simchowitz18a, Jedra2023}. 
}
\section{Conclusion}
In this paper we have analyzed the sample complexity of 
data-driven confidence ellipsoids for linear regression problems which are constructed as outer approximations of the Sign-Perturbed Sums (SPS) confidence regions. These confidence ellipsoids share their centers (i.e., the least-squares estimate) and shape matrices (i.e., the empirical covariance of the regressors) with the classical asymptotic ellipsoids, only their radii are different. These radii can be calculated by convex programming methods resulting in distribution-free confidence ellipsoids with finite sample coverage guarantees.

Our results build upon the theory of concentration inequalities and give high probability upper bounds on the sizes of these confidence ellipsoids under the assumptions that the observation noises are independent, symmetric, nonatomic and subgaussian, as well as that the regressors are suitably exciting. We have also showed that the sizes shrink at the optimal rate.

Future research directions include extending our results to dynamical systems, e.g., to ARX and state-space models.

{\appendices

\section{Proof of Lemma \ref{lemma:sample_complex_eoa_m2_q1}}\label{sec:proof_sample_complex_eoa_m2_q1}
\begin{proof}
The results of \eqref{equ:eoa_concentration_firstpart} and \eqref{equ:eoa_concentration_secpart} will be combined to provide the claimed stochastic lower bound.

From \eqref{equ:eoa_concentration_firstpart}, if $0 \leq \varepsilon \leq 8\sigma^2 d^2$, we have
\begin{align}
    \BP\left(\frac{\vert X_{\scriptscriptstyle P} - \mathbb{E}X_{\scriptscriptstyle P}\vert}{n\lambda_0} \geq \varepsilon\right) \leq 2\exp\left(-\frac{\varepsilon^2n^2\lambda_0^2}{64d^2\sigma^4}\right)\!,
\end{align}
which can be reformulated by introducing $\delta \defeq 4\exp(-\varepsilon^2n^2\lambda_0^2/(64d^2\sigma^4))$ as, for all $\delta: 4\exp(-(nd\lambda_0)^2) \leq \delta \leq 1$, with probability (w.p.) at least $1-\delta/2$, it holds that
\begin{align}
    \frac{\vert X_{\scriptscriptstyle P} - \mathbb{E}X_{\scriptscriptstyle P}\vert}{n\lambda_0} \leq \frac{8d\sigma^2\ln^{\frac{1}{2}}(\tfrac{4}{\delta})}{n\lambda_0}.
\end{align}
Likewise, if $\varepsilon > 8\sigma^2{d^2}$, for all $\delta$, such that $0 \leq \delta < 4\exp(-(nd\lambda_0)^2) $ we have, w.p. at least $1-\delta/2$, that
\begin{align}
    \frac{\vert X_{\scriptscriptstyle P} - \mathbb{E}X_{\scriptscriptstyle P}\vert}{n\lambda_0} \leq \frac{8\sigma^2\ln(\tfrac{4}{\delta})}{n\lambda_0}.
\end{align}
Combining these together we get, w.p.\ at least $1-\delta/2$, that 
\begin{align}
    &\frac{\vert X_{\scriptscriptstyle P} - \mathbb{E}X_{\scriptscriptstyle P}\vert}{n\lambda_0} \leq \notag\\
    &\begin{cases}
        \dfrac{8d\sigma^2\ln^{\frac{1}{2}}(\tfrac{4}{\delta})}{n\lambda_0} \vspace{1mm}& 4e^{-(nd\lambda_0)^2} \leq \delta \leq 1,\\
        \dfrac{8\sigma^2\ln(\tfrac{4}{\delta})}{n\lambda_0} & 0 \leq \delta < 4e^{-(nd\lambda_0)^2}. 
    \end{cases}
\end{align}
In the proof of Lemma \ref{lemma:numerator_indicator} \cite[Appendix B]{szentpeteri2025} it is shown that a random variable in the form of $X = w_n\tr  M w_n$, where $M$ is a projection matrix with rank($M$)$=d$, can be upper bounded as $X \leq \sum_{i=1}^d \tilde{w}_{d,i}^2$, where $\{\tilde{w}_{d,i}\}$ are (zero mean) $\sigma$-subgaussians. Since $X_{\scriptscriptstyle P}$ is in this form, we have that
\vspace{-1mm}
\begin{align}
    \BE\big[ X_{\scriptscriptstyle P}\big] &\leq \BE\left[\sum_{t=1}^d \tilde{w}_{d,i}^2\right] = \sum_{t=1}^d\BE\left[\tilde{w}_{d,i}^2\right] =
    \sum_{t=1}^d \text{Var}\left[\tilde{w}_{d,i}\right] \notag\\
    &\leq \sum_{t=1}^d\sigma^2 = d\,\sigma^2.
\end{align}
Applying the reverse triangle inequality and that $\mathbb{E}X_{\scriptscriptstyle P} \leq d\sigma^2$:
\begin{align}
    \frac{\vert X_{\scriptscriptstyle P} - \mathbb{E}X_{\scriptscriptstyle P}\vert}{n\lambda_0} \geq \frac{\vert X_{\scriptscriptstyle P}\vert - \vert\mathbb{E}X_{\scriptscriptstyle P}\vert}{n\lambda_0} \geq \frac{\vert X_{\scriptscriptstyle P}\vert}{n\lambda_0} - \frac{d\sigma^2}{n\lambda_0},
\end{align}
and we have w.p. at least $1-\delta/2$ that
\vspace{-1mm}
\begin{align}\label{equ:num_result}
    &\frac{\vert X_{\scriptscriptstyle P}\vert}{n\lambda_0} \leq \notag\\
    &\begin{cases}
        \dfrac{d\sigma^2\left(8\ln^{\frac{1}{2}}(\tfrac{4}{\delta})+1\right)}{n\lambda_0}\vspace{1mm} & 4e^{-(nd\lambda_0)^2} \leq \delta \leq 1,\\
        \dfrac{\sigma^2\left(8\ln(\tfrac{4}{\delta})+d\right)}{n\lambda_0} & 0 \leq \delta < 4e^{-(nd\lambda_0)^2}.
    \end{cases}
\end{align}
Next, the concentration inequality result of \eqref{equ:eoa_concentration_secpart} is reformulated as, w.p. at least $1-\delta/2$, we have
\vspace{-2mm}
\begin{align}\label{equ:den_result}
    &\frac{1+\lambda^{1/2}_{\text{max}}(K^2)}{1-\lambda^{1/2}_{\text{max}}(K^2)}\leq \frac{1+\left(\frac{2\kappa d^2\ln\left(\frac{4d}{\delta}\right)}{n^{\rho}}\right)^{\frac{1}{2}}}{1-\left(\frac{2\kappa d^2\ln\left(\frac{4d}{\delta}\right)}{n^{\rho}}\right)^{\frac{1}{2}}}\notag\\
    &=\frac{n^{\frac{\rho}{2}}+\left(2\kappa d^2\ln\left(\frac{4d}{\delta}\right)\right)^{\frac{1}{2}}}{n^{\frac{\rho}{2}}-\left(2\kappa d^2\ln\left(\frac{4d}{\delta}\right)\right)^{\frac{1}{2}}} = \frac{n^{\frac{\rho}{2}}+g^{\frac{1}{2}}(\delta)}{n^{\frac{\rho}{2}}-g^{\frac{1}{2}}(\delta)},
\end{align}
where we used the definition of $g(\delta)$ \eqref{equ:f_and_g_not_appendix}. Using the union bound it can be shown that if
\begin{align}\label{equ:union_1}
    &\BP(Y_1\leq y_1) \geq 1-p_1, &\BP(Y_2\leq y_2) \geq 1-p_2,
\end{align}
then\vspace{-2mm}
\begin{align}\label{equ:union_2}
    \BP(Y_1Y_2\leq y_1y_2) \geq 1-(p_1 + p_2).
\end{align}
Combining the result of \eqref{equ:union_1}-\eqref{equ:union_2} with the stochastic lower bounds of \eqref{equ:num_result} and \eqref{equ:den_result}
we conclude that w.p. at least $1-\delta$
    \begin{align}
        &\frac{\vert X_{\scriptscriptstyle P}\vert\left(1+\lambda^{1/2}_{\text{max}}(K^2)\right)}{n\lambda_0\left(1-\lambda^{1/2}_{\text{max}}(K^2)\right)} \leq \notag\\
        &\begin{cases}
            \dfrac{d\sigma^2\left(8\ln^{\frac{1}{2}}(\tfrac{4}{\delta})+1\right)\left(n^{\frac{\rho}{2}}+g^{\frac{1}{2}}(\delta)\right)}{n\lambda_0\left(n^{\frac{\rho}{2}}-g^{\frac{1}{2}}(\delta)\right)}\vspace{1mm} & 4e^{-(nd\lambda_0)^2} \leq \delta \leq 1,\\
            \dfrac{\sigma^2\left(8\ln(\tfrac{4}{\delta})+d\right)\left(n^{\frac{\rho}{2}}+g^{\frac{1}{2}}(\delta)\right)}{n\lambda_0\left(n^{\frac{\rho}{2}}-g^{\frac{1}{2}}(\delta)\right)} & 0\leq \delta < 4e^{-(nd\lambda_0)^2}. \\
        \end{cases}\notag\\[-5mm]
    \end{align}
    The above stochastic lower bound can be applied to obtain a high probability upper bound for the size of the 0.5-level SPS EOA region. It is shown in \eqref{equ:eoa_ellipsoid_upperbound}, that for every $\theta \in \outappr[0.5,n]$:\vspace{-1mm}
    \begin{align}
        \|\theta - \hat{\theta}_n\|^2 \leq \frac{\tfrac{1}{n}X_{\scriptscriptstyle P}(1+\lambda^{1/2}_{\text{max}}(K^2))}{\lambda_0(1-\lambda^{1/2}_{\text{max}}(K^2))},
    \end{align}
    consequently, it holds w.p.\ at least $1-\delta$ that
\begin{align}
    &\sup_{\theta \in \outappr[0.5,n]}\|\theta - \hat{\theta}_n\| \leq 
        \dfrac{f(\delta)(n^{\frac{\rho}{2}}+g^{\frac{1}{2}}(\delta))^{\frac{1}{2}}}{\left(n\lambda_0(n^{\frac{\rho}{2}}-g^{\frac{1}{2}}(\delta))\right)^{\frac{1}{2}}},
\end{align}
where we used the definition of $f(\delta)$ from \eqref{equ:f_and_g_not_appendix}.
\end{proof}

\section{Technical Lemmas}
\begin{lemma}\label{lemma:S1_LS_notnull}
Assuming \ref{assu:noise} and that the SPS confidence region is bounded, it holds almost surely that
\begin{align}
    \|S_1(\hat{\theta}_n)\|^2 \neq 0.
\end{align}
\end{lemma}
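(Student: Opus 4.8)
The plan is to reduce the claim to the almost-sure non-vanishing of a vector that is linear in the noise (with the regressors and signs held fixed), and then to pinpoint exactly where boundedness is used. First I would evaluate $S_1$ at the least-squares estimate. Writing $H \defeq \Phi_n R_n^{-1}\Phi_n\tr$ for the hat matrix, the residual at $\hat{\theta}_n$ is $y_n - \Phi_n\hat{\theta}_n = (I-H)y_n$; since $(I-H)\Phi_n = 0$ and $y_n = \Phi_n\theta^* + w_n$, this equals $(I-H)w_n$. Hence
\begin{equation*}
    S_1(\hat{\theta}_n) = \bar{R}_n^{-\frac12}\tfrac1n\Phi_n\tr D_{\alpha,n}(I-H)w_n = \bar{R}_n^{-\frac12}\tfrac1n M w_n,\qquad M \defeq \Phi_n\tr D_{\alpha,n}(I-H).
\end{equation*}
Because $\bar{R}_n^{-1/2}$ is nonsingular under \ref{assu:R_nonsigular}, we have $\|S_1(\hat{\theta}_n)\|^2 = 0$ if and only if $M w_n = 0$, so it suffices to prove $\BP(M w_n = 0) = 0$.

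The crucial step, which I expect to be the main obstacle, is to show $M \neq 0$, and this is precisely what boundedness buys. Here I would establish the identity $M M\tr = A$, where $A = R_n - Q_n R_n^{-1}Q_n$ is the matrix from \eqref{equ:def_A}. Using $D_{\alpha,n}^2 = I$, the symmetry and idempotence of $I-H$, and $Q_n = \Phi_n\tr D_{\alpha,n}\Phi_n$, one obtains $M M\tr = \Phi_n\tr D_{\alpha,n}(I-H)D_{\alpha,n}\Phi_n = R_n - Q_n R_n^{-1}Q_n = A$. Since the SPS region is bounded, $A \succ 0$ \cite{Care2022}, so $M M\tr \succ 0$; therefore $M$ has full row rank $d \geq 1$, and in particular $M \neq 0$. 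This is exactly the point at which boundedness is indispensable: for an unbounded region $A$ is only positive \emph{semi}definite, and $M$ could be rank-deficient.

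It then remains to show that a nonzero $M$ annihilates $w_n$ only with probability zero, which I regard as the delicate but routine measure-theoretic part. Since $M \neq 0$, choose a row $a\tr$ of $M$ with $a \neq 0$; then $\{M w_n = 0\} \subseteq \{a\tr w_n = 0\}$. Picking an index $j$ with $a_j \neq 0$ and conditioning on $\{W_t\}_{t\neq j}$, the event $a\tr w_n = 0$ forces $W_j = -a_j^{-1}\sum_{t\neq j}a_t W_t$, a value measurable with respect to the other noises and hence independent of $W_j$. By \ref{assu:noise} the variable $W_j$ is nonatomic, so this conditional probability is $0$ almost surely; taking expectations yields $\BP(a\tr w_n = 0) = 0$, whence $\BP(M w_n = 0) = 0$. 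Combined with the first step, this gives $\|S_1(\hat{\theta}_n)\|^2 \neq 0$ almost surely. The heart of the argument is thus the linear-algebraic identity $M M\tr = A$ that ties the non-degeneracy of $S_1(\hat{\theta}_n)$ to the very matrix governing boundedness, after which killing the proper subspace $\ker M$ via independence and nonatomicity is straightforward.
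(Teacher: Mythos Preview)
Your proof is correct and follows essentially the same route as the paper: both reduce to showing that $\bar{R}_n^{-1/2}\Phi_n\tr D_{\alpha,n}(I-H)w_n \neq 0$ a.s., establish that the linear map has rank $d$ via the identity linking it to $A$ (you compute $MM\tr = A$ directly using idempotence of $I-H$, while the paper factors $\bar{R}_n^{-1/2}A\bar{R}_n^{-1/2}$ through $L=\bar{R}_n^{-1/2}M$; these are equivalent), invoke $A\succ 0$ from boundedness \cite{Care2022}, and then use independence and nonatomicity of the noises to kill the null event. The measure-theoretic step you wrote out is exactly the content of the paper's auxiliary Lemma~\ref{lemma:rank_1_dot_nonatomic}.
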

\begin{proof}
Using our notation \eqref{equ:def_D}-\eqref{equ:def_Q} $S_1(\theta)$ can be written as
\begin{align}
    S_1(\theta) = \bar{R}_n^{-\frac{1}{2}}\frac{1}{n}\left(Q_n(\theta^* - \theta) + \Phi_n\tr  D_{\alpha, n}w_n\right),
\end{align}
therefore
\begin{align}
    S_1(\hat{\theta}_n) &= \frac{1}{n}\bar{R}_n^{-\frac{1}{2}}\left(Q_n(\theta^* - R_n^{-1}\Phi_n\tr  y_n) + \Phi_n\tr  D_{\alpha, n}w_n\right)\notag\\
    &= \frac{1}{n}\bar{R}_n^{-\frac{1}{2}}\left(Q_n(- R_n^{-1}\Phi_n\tr w_n) + \Phi_n\tr  D_{\alpha, n}w_n\right)\notag\\
    &=\frac{1}{n}\bar{R}_n^{-\frac{1}{2}}\Phi_n\tr D_{\alpha, n}\left(I- \Phi_nR_n^{-1}\Phi_n\tr\right)w_n.
\end{align}
In \cite{Care2022} it is shown that if the confidence region is bounded,
then $A$ is positive definite, consequently $\bar{R}_n^{-\frac{1}{2}}A\bar{R}_n^{-\frac{1}{2}}$ is also positive definite (\ref{assu:R_nonsigular}), therefore it is full rank. Using our definition of $A$ \eqref{equ:def_A}, it can be written that $\bar{R}_n^{-\frac{1}{2}}A\bar{R}_n^{-\frac{1}{2}} = \bar{R}_n^{-\frac{1}{2}}\Phi\tr D_{\alpha,n}(I-\Phi_nR_n^{-1}\Phi_n\tr)D_{\alpha,n}\Phi\bar{R}_n^{-\frac{1}{2}}$. Lets denote $L \defeq \bar{R}_n^{-\frac{1}{2}}\Phi\tr D_{\alpha,n}(I-\Phi_nR_n^{-1}\Phi_n\tr)$. Since $L$ is a part of the product of the $d$-ranked matrix $\bar{R}_n^{-\frac{1}{2}}A\bar{R}_n^{-\frac{1}{2}}$, it follows that rank$(L) = d$. Notice that $S_1(\hat{\theta}_n) = 1/n \cdot L w_n$, therefore by applying Lemma \ref{lemma:rank_1_dot_nonatomic} we conclude that $S_1(\hat{\theta}_n) \neq 0$ almost surely, hence $\|S_1(\hat{\theta}_n)\|^2 \neq 0$ almost surely.
\end{proof}
\begin{lemma}\label{lemma:condition2_not_null}
    Assuming \ref{assu:noise}, \ref{assu:R_nonsigular} and $\lambda_{\text{max}}(K^2) < 1$, the following holds almost surely for $\xi = 1/(1-\lambda_{\text{max}}(K^2))$,
    \begin{align}\label{equ:psd_cond_2_notnull}
        \left(I-A'_0(\xi)\left(A'_0(\xi)\right)^\dagger\right)\xi b_0 \neq 0,
    \end{align}
    where $A'_0(\xi)$ and $b_0$ are defined in \eqref{equ:def_Ac-coma} and \eqref{equ:def_bc}, respectively.
\end{lemma}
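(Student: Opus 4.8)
The plan is to use that $A'_0(\xi)$ is symmetric, so $I - A'_0(\xi)\left(A'_0(\xi)\right)^\dagger$ is the orthogonal projector $\Pi$ onto $\ker A'_0(\xi)$; thus \eqref{equ:psd_cond_2_notnull} is the assertion that $\xi b_0$ is not orthogonal to this kernel. First I would pin down the kernel at the critical value $\xi = 1/(1-\lambda_{\text{max}}(K^2))$. Since $A'_0(\xi) = A_0 - \tfrac{1}{\xi}I$ and, exactly as in the derivation of \eqref{equ:A_psi_reform}, $A_0 = O\tr(I-K^2)O$ with the orthogonal matrix $O \defeq \Phi_{R,n}R_n^{-1/2}$, the eigenvalues of $A_0$ are $\{1-\lambda_i^2(K)\}$. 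At this $\xi$ the smallest one, $1-\lambda_{\text{max}}(K^2)$, is exactly cancelled, so $\ker A'_0(\xi)$ equals the image under $O\tr$ of the $\lambda_{\text{max}}(K^2)$-eigenspace of $K^2$, which is nontrivial.

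Fix a unit eigenvector $z$ of $K$ with $Kz = \nu z$ and $\nu^2 = \lambda_{\text{max}}(K^2) \defeq \mu$, and set $u \defeq O\tr z$, which lies in $\ker A'_0(\xi)$ since $A_0 u = (1-\mu)u$. Because $\Pi u = u$, the $u$-component of the left-hand side of \eqref{equ:psd_cond_2_notnull} is $u\tr \Pi\,\xi b_0 = \xi\, u\tr b_0$. Linearizing $b_0$ in the noise via \eqref{equ:y_to_w_rewrite}, i.e.\ $b_0 = \tfrac{1}{\sqrt n}R_n^{-1/2}Q_nR_n^{-1}BD_{\alpha,n}w_n$, gives $\xi\, u\tr b_0 = v\tr w_n$ with the vector $v \defeq \tfrac{\xi}{\sqrt n}D_{\alpha,n}B\tr R_n^{-1}Q_nR_n^{-1/2}u$, which is \emph{deterministic} once the regressors and signs are fixed (neither $A'_0$, $\xi$, nor $K$ depend on $\{W_t\}$). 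By the nonatomicity in \ref{assu:noise}, $v\tr w_n \neq 0$ almost surely as soon as $v \neq 0$ --- this is precisely the mechanism of Lemma \ref{lemma:rank_1_dot_nonatomic} already used in Lemma \ref{lemma:S1_LS_notnull}. Hence it suffices to prove $v \neq 0$, and since $D_{\alpha,n}$ and $R_n^{-1/2}$ are invertible, this reduces to $B\tr R_n^{-1}Q_nR_n^{-1/2}u \neq 0$.

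The core computation, which I expect to be the main obstacle, is to carry out this evaluation. Using $Q_n = \Phi_{R,n}\tr K\Phi_{R,n}$, the identities $\Phi_{R,n}R_n^{-1}\Phi_{R,n}\tr = I$ and $\Phi_n R_n^{-1}\Phi_{R,n}\tr = \Phi_{Q,n}$, and $Kz = \nu z$, a short reduction collapses the inner factor $R_n^{-1}Q_nR_n^{-1/2}u = R_n^{-1}\Phi_{R,n}\tr Kz$ and yields, with $B\tr = \Phi_n - D_{\alpha,n}\Phi_nR_n^{-1}Q_n$, that $B\tr R_n^{-1}Q_nR_n^{-1/2}u = \nu\big(\Phi_{Q,n}z - \nu D_{\alpha,n}\Phi_{Q,n}z\big)$. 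Projecting by $\Phi_{Q,n}\tr$ and invoking $\Phi_{Q,n}\tr\Phi_{Q,n} = I$ together with $K = \Phi_{Q,n}\tr D_{\alpha,n}\Phi_{Q,n}$ from \eqref{equ:def_K} gives $\Phi_{Q,n}\tr\big(\Phi_{Q,n}z - \nu D_{\alpha,n}\Phi_{Q,n}z\big) = (1-\nu^2)z = (1-\mu)z$. This is nonzero because $\mu = \lambda_{\text{max}}(K^2) < 1$ by hypothesis (the bounded-region case), and $\nu \neq 0$ since $K$ is nonsingular under \ref{assu:R_nonsigular}; therefore $B\tr R_n^{-1}Q_nR_n^{-1/2}u \neq 0$, so $v \neq 0$ and the claim follows almost surely. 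The delicate part is tracking these cancellations faithfully and recognizing that the surviving factor is exactly $(1-\mu)z$, for which the strict inequality $\lambda_{\text{max}}(K^2)<1$ is indispensable.
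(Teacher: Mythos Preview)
Your argument is correct. It follows a different, more streamlined route than the paper's. The paper proceeds by computing the \emph{entire} projector $I-A'_0(\xi)(A'_0(\xi))^\dagger$ and the full SVD of $B_0 \defeq R_n^{-1/2}Q_nR_n^{-1}BD_{\alpha,n}$, matching up the left singular vectors of $B_0$ with the eigenvectors of $A'_0(\xi)$, and then expressing the whole quantity as $\tfrac{\max_i|\lambda_i(K)|}{\sqrt{n(1-\lambda_{\text{max}}(K^2))}}\,U_{B_0}D_{\text{max}}V_{B_0}\tr w_n$, a rank-$\geq 1$ matrix times $w_n$, after which Lemma~\ref{lemma:rank_1_dot_nonatomic} applies. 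You instead fix one eigenvector $z$ of $K$ attaining the maximum, push it through to $u=O\tr z\in\ker A'_0(\xi)$, and show that the scalar component $u\tr\Pi\,\xi b_0 = v\tr w_n$ has a nonzero deterministic coefficient vector $v$: your reduction $B\tr R_n^{-1}Q_nR_n^{-1/2}u=\nu(\Phi_{Q,n}z-\nu D_{\alpha,n}\Phi_{Q,n}z)$ and the projection by $\Phi_{Q,n}\tr$ that leaves $\nu(1-\mu)z$ are exactly right (and reveal, more transparently than the paper, why both $\mu<1$ and the nonsingularity of $K$ from \ref{assu:R_nonsigular} are needed). The paper's approach yields an explicit closed form for the projected vector, which is mildly informative; your approach avoids the SVD bookkeeping altogether and reaches the same conclusion with less machinery. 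One cosmetic remark: in your reduction ``since $D_{\alpha,n}$ and $R_n^{-1/2}$ are invertible'' only the invertibility of $D_{\alpha,n}$ is actually used to pass from $v\neq 0$ to $B\tr R_n^{-1}Q_nR_n^{-1/2}u\neq 0$.
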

\begin{proof}
    Using the reformulation of $\xi A'_0(\xi)$ from \eqref{equ:A_psi_reform}, we have
    \begin{align}\label{equ:eigen_Ac'}
        A'_0(\xi) & =  V_{\Phi_{R}}U_{\Phi_{R}}\tr  (-\tfrac{1}{\xi}I + I-K^2) U_{\Phi_{R}}V_{\Phi_{R}}\tr \notag\\
        & = V_{\Phi_{R}}U_{\Phi_{R}}\tr  V_K (-\tfrac{1}{\xi}I + I - \Lambda_K^2) V_K\tr  U_{\Phi_{R}}V_{\Phi_{R}}\tr,
    \end{align}
    where $K = V_K \Lambda_K V_K\tr $ is
    the eigendecomposition of $K$.
    Substituting $\xi = 1/(1-\lambda_{\text{max}}(K^2))$ to the first part of \eqref{equ:psd_cond_2_notnull}
    and using the decomposition above, we get
    \begin{align}\label{equ:def_D1}
        &\left(I- A'_0\left(\tfrac{1}{1-\lambda_{\text{max}}(K^2)}\right) \left(A'_0\left(\tfrac{1}{1-\lambda_{\text{max}}(K^2)}\right)\right)^\dagger\right) = \notag\\
        &V_{\Phi_{R}}U_{\Phi_{R}}\tr V_KD_{\text{max}} V_K\tr U_{\Phi_{R}}V_{\Phi_{R}}\tr,
    \end{align}
    with $D_{\text{max}} \defeq \text{diag}(0,\dots,0,1,\dots,1) \in \BR^{d \times d}$, where the number of ones equals to the multiplicity of $\lambda_{\text{max}}(K^2)$ in $\Lambda_K^2$.
    
    The vector $b_0$ defined in \eqref{equ:def_bc} can be rewritten by using the reformulation from \eqref{equ:y_to_w_rewrite} as
    \begin{align}
        b_0 & = \frac{1}{n^{\frac{1}{2}}}R_n^{-\frac{1}{2}}Q_nR_n^{-1}B D_{\alpha,n} y_n\notag\\
            & = \frac{1}{n^{\frac{1}{2}}}R_n^{-\frac{1}{2}}Q_nR_n^{-1}B D_{\alpha,n} w_n.
    \end{align}
    Let
    \begin{align}
        B_0 \defeq R_n^{-\frac{1}{2}}Q_nR_n^{-1}B D_{\alpha,n}.
    \end{align}
    Notice that $BB\tr  = A$, since
    \begin{align}\label{equ:BBtr}
    BB\tr &= \left(\Phi_n\tr  - Q_nR_n^{-1}\Phi_n\tr  D_{\alpha,n}\right)\left(\Phi_n - D_{\alpha,n}\Phi_n R_n^{-1}Q_n\right) \notag\\
    &=\left(R_n - Q_nR_n^{-1}Q_n - Q_nR_n^{-1}Q_n+ Q_nR_n^{-1}Q_n\right)\notag\\
    &=A.
    \end{align}
    Then, it holds that
    \begin{align}\label{equ:B_0B_0_T}
        B_0B_0\tr & = R_n^{-\frac{1}{2}}Q_nR_n^{-1}B D_{\alpha,n}D_{\alpha,n} B\tr  R_n^{-1}Q_nR_n^{-\frac{1}{2}} \notag\\
                  & = R_n^{-\frac{1}{2}}Q_nR_n^{-1}AR_n^{-1}Q_nR_n^{-\frac{1}{2}}.
    \end{align}
    Using the same reformulations as in \eqref{equ:reformulation_A}-\eqref{equ:reformulation_A_2}, specifically $A = \Phi_{R,n}\tr (I - K^2)\Phi_{R,n}$, $R^{-1} = \Phi_{R,n}^{-1}\Phi_{R,n}\tri$ and $Q_n=\Phi_{R,n}\tr K\Phi_{R,n}$ it can be written that 
    \begin{align}\label{equ:BO_middle_term}
        Q_nR_n^{-1}AR_n^{-1}Q_n = \Phi_{R,n}\tr K(I - K^2)K\Phi_{R,n}.
    \end{align}
    Writing back \eqref{equ:BO_middle_term} to \eqref{equ:B_0B_0_T} and applying 
    $R_n^{-1/2}=V_{\Phi_{R}}\Sigma_{\Phi_{R}}^{-1}V_{\Phi_{R}}\tr $ and $\Phi_{R,n} = U_{\Phi_{R}}\Sigma_{\Phi_{R}}V_{\Phi_{R}}\tr$ from \eqref{equ:Phi_R_SVD}, we get
    \begin{align}
        B_0B_0\tr = V_{\Phi_{R}}U_{\Phi_{R}}\tr K(I - K^2)KU_{\Phi_{R}}V_{\Phi_{R}}\tr.
    \end{align}
    Finally, using again the eigendecomposition of $K$
    \begin{align}\label{equ:B_0B_0_T_final}
        B_0B_0\tr = V_{\Phi_{R}}U_{\Phi_{R}}\tr V_K \Lambda_K(I - \Lambda_K^2)\Lambda_K V_K\tr U_{\Phi_{R}}V_{\Phi_{R}}\tr.
    \end{align}

    Notice the similarity between \eqref{equ:eigen_Ac'} and \eqref{equ:B_0B_0_T_final}. In these formulas $-\tfrac{1}{\xi}I + I - \Lambda_K^2$ and $\Lambda_K(I - \Lambda_K^2)\Lambda_K$ are diagonal and $V_{\Phi_{R}}U_{\Phi_{R}}\tr V_K$ is orthonormal, since $V_K\tr U_{\Phi_{R}}  V_{\Phi_{R}}\tr V_{\Phi_{R}}U_{\Phi_{R}}\tr V_K = I$. It follows that  \eqref{equ:eigen_Ac'} and \eqref{equ:B_0B_0_T_final} are eigendecompositions of $A'_0(\xi)$ and $B_0B_0\tr$, therefore they can be written as $A'_0(\xi) = V_{A_{0}}(-\tfrac{1}{\xi}I + I - \Lambda_K^2) V_{A_{0}}\tr$ and $B_0B_0\tr =  V_{A_{0}}\Lambda_K(I - \Lambda_K^2)\Lambda_K V_{A_{0}}\tr$, where we used the same arrangements of eigenvalues and eigenvectors.
    Using the SVD decomposition of $B_0 = U_{B_{0}}\Sigma_{B_{0}}V_{B_{0}}\tr $ and the above eigendecomposition $B_0B_0\tr =  V_{A_{0}}\Lambda_K(I - \Lambda_K^2)\Lambda_K V_{A_{0}}\tr$ it holds that 
    \begin{align}\label{equ:B_0_svd_equ_A0_eig}
        U_{B_{0}}\Sigma_{B_{0}}^2U_{B_{0}}\tr  = V_{A_{0}} \Lambda_K(I-\Lambda_K^2) \Lambda_K V_{A_{0}}\tr .
    \end{align}
    From equation \eqref{equ:B_0_svd_equ_A0_eig}, it follows that there is an arrangement of eigenvalues in $\Sigma_{B_{0}}^2$ and $\Lambda_K(I-\Lambda_K^2) \Lambda_K$ that for their corresponding eigenvectors $U_{B_{0}} = V_{A_{0}}$. Using this arrangement of eigenvectors and eigenvalues, i.e., $U_{B_{0}} = V_{A_{0}}$ and $\Sigma_{B_{0}} = ((I - \Lambda_K^2)\Lambda_K^2)^{1/2}$, furthermore our result from \eqref{equ:def_D1}, we conclude that in case $\xi = 1/(1-\lambda_{\text{max}}(K^2))$, we have
    \begin{align}
        &\left(I- A'_0(\xi)\left(A'_0(\xi)\right)^\dagger\right)\xi b_0 = \notag\\
        &\frac{1}{1-\lambda_{\text{max}}(K^2)}V_{A_{0}}D_{\text{max}}V_{A_{0}}\tr  B_0  w_n = \notag\\
        &\frac{1}{n^{\frac{1}{2}}(1-\lambda_{\text{max}}(K^2))}V_{A_{0}}D_{\text{max}}V_{A_{0}}\tr  U_{B_{0}}\Sigma_{B_{0}}V_{B_{0}}\tr  w_n = \notag\\
        &\frac{1}{n^{\frac{1}{2}}(1-\lambda_{\text{max}}(K^2))}U_{B_{0}}D_{\text{max}}((I - \Lambda_K^2)\Lambda_K^2)^{\frac{1}{2}}V_{B_{0}}\tr w_n=\notag\\
        &\frac{\max_i\abs{\lambda_i(K)}}{(n(1-\lambda_{\text{max}}(K^2)))^{\frac{1}{2}}} U_{B_{0}}D_{\text{max}}V_{B_{0}}\tr  w_n,
    \end{align}
    where in the last step we used that $D_{\text{max}}$ ``selects'' the largest eigenvalues in the ordering, as in \eqref{equ:def_D1}, which equals $\max_i\abs{\lambda_i(K)}\cdot(1-\lambda_{\text{max}}(K^2))^{1/2}$. It was shown in the proof of Theorem \ref{thm:sample_complex_eoa} that the eigenvalues of matrix $K$ satisfy $\forall i: 0 < \abs{\lambda_i(K)} \leq 1$ and we assume that $\lambda_{\text{max}}(K^2) < 1$, therefore $(\max\abs{\lambda_i(K)})/((n(1-\lambda_{\text{max}}(K^2)))^{1/2}) \neq 0$. It holds that $1 \leq \text{rank}(U_{B_{0}}D_{\text{max}}V_{B_{0}}\tr) \leq d$, which follows from the fact that the number of ones in $D_{\text{max}}$ equals to the multiplicity of $\lambda_{\text{max}}(K^2)$ in $\Lambda_K^2$. Then, $U_{B_{0}}D_{\text{max}}V_{B_{0}}\tr  w_n$ is almost surely nonzero, as $w_n$ consist of independent, nonatomic random variables (\ref{assu:noise}), therefore Lemma \ref{lemma:rank_1_dot_nonatomic} can be applied.
\end{proof}

\begin{lemma}\label{lemma:rank_1_dot_nonatomic}
    Let $M \in \BR^{k \times n}$ 
    with $\text{rank}(M)>0$
    and  $w_0 \in \BR^{n}$ a vector of independent nonatomic random variables. Then, 
    \begin{align}
        \mathbb{P}( M w_0 \neq 0 ) = 1.
    \end{align}
\end{lemma}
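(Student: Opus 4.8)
The plan is to reduce the matrix statement to a single linear functional and then exploit the nonatomicity of one coordinate through a conditioning argument. First I would observe that $Mw_0 = 0$ forces every row of $M$ to be orthogonal to $w_0$. Since $\text{rank}(M) > 0$, the matrix $M$ has at least one nonzero row, say $m\tr$ (its $j$-th row). Hence the events satisfy the inclusion $\{Mw_0 = 0\} \subseteq \{m\tr w_0 = 0\}$, so it suffices to prove that $\BP(m\tr w_0 = 0) = 0$ for an arbitrary fixed nonzero vector $m \in \BR^n$.

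To bound this probability, I would isolate a coordinate on which $m$ does not vanish. Because $m \neq 0$, there exists an index $i$ with $m_i \neq 0$. Writing $m\tr w_0 = m_i w_{0,i} + \sum_{l \neq i} m_l w_{0,l}$, the event $\{m\tr w_0 = 0\}$ is equivalent to $\{w_{0,i} = c\}$, where $c \defeq -\frac{1}{m_i}\sum_{l \neq i} m_l w_{0,l}$ is a random variable depending only on the coordinates $\{w_{0,l}\}_{l \neq i}$, and in particular independent of $w_{0,i}$.

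The key step is then to condition on $\{w_{0,l}\}_{l \neq i}$ and combine independence with nonatomicity. Given any realization of these coordinates, $c$ becomes a fixed constant, and since $w_{0,i}$ is nonatomic and independent of the remaining coordinates, the conditional probability $\BP(w_{0,i} = c \mid \{w_{0,l}\}_{l \neq i})$ is zero almost surely. Integrating over the joint distribution of $\{w_{0,l}\}_{l \neq i}$, via the product-measure form of Fubini's theorem (equivalently, the tower property), yields $\BP(m\tr w_0 = 0) = 0$, and therefore $\BP(Mw_0 \neq 0) = 1$.

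The argument is essentially elementary; the only point requiring care is the measure-theoretic justification of the conditioning step, namely making precise that freezing the other coordinates turns the target event into a single atom of the nonatomic law of $w_{0,i}$. I expect this to be the main (though mild) obstacle, and it is handled cleanly by the independence assumption, which permits the use of Fubini's theorem on the product space.
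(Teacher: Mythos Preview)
Your proposal is correct and follows essentially the same approach as the paper: reduce to a single nonzero row, isolate a coordinate with nonzero coefficient, and conclude via conditioning on the remaining coordinates together with nonatomicity. The paper phrases the final step through the law of total expectation rather than Fubini, but the argument is identical.
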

\begin{proof}
Since $\text{rank}(M)>0$, there is at least one row of $M$ which is nonzero. Let $\Vec{m}$ denote any of these row vectors. Observe that if it enough to prove that $\Vec{m} w_0 \neq 0$ almost surely.
There exists an index $i$ with $\Vec{m}_{i} \neq 0$. Then, we have
\begin{align}
    \Vec{m}w_0 &= \sum_{t=1}^n\Vec{m}_{t}W_{0,t} = \Vec{m}_{i}\left(W_{0,i} + \sum_{\substack{t=1, t \neq i}}^n\frac{\Vec{m}_{t}}{\Vec{m}_{i}}W_{0,t}\right)\notag\\
    &= \Vec{m}_{i}\left(W_{0,i} + \sum_{\substack{t=1, t \neq i}}^n\Vec{m}'_{t}W_{0,t}\right),
\end{align}
where $w_0=[W_{0,1}, W_{0,2}, \dots, W_{0,n}]\tr$. $Mw_0 = 0$, if $W_{0,i} + \sum_{\substack{t=1, t \neq i}}^n\Vec{m}'_{t}W_{0,t}=0$, hence we investigate the probability $\BP\left(W_{0,i} + \sum_{\substack{t=1, t \neq i}}^n\Vec{m}'_{t}W_{0,t}=0\right)$. Using the law of total expectation it can be derived that
    \begin{align}
        &\BP\left(W_{0,i} + \sum_{\substack{t=1, t \neq i}}^n\Vec{m}'_{t}W_{0,t}=0\right) = \\
        &\BE\left[\BP\left(W_{0,i} + \sum_{\substack{t=1, t \neq i}}^n\Vec{m}'_{t}W_{0,t}=0\Bigg\vert \{W_{0,t}\}_{\substack{t=1,t \neq i}}^n\right)\right] = 0,\notag
    \end{align}
    since $W_{0,i}$ is nonatomic for every $i$.
\end{proof}

\begin{lemma}\label{lemma:eig_P}
    Let $P(\xi)$ be defined as in \eqref{equ:P_defined}, that is
    \begin{align}\label{equ:P_defined_general}
    P(\xi)\,=\; &D_{\alpha,n}B\tr \Big[R_n^{-1}Q_n R_n^{-\frac{1}{2}}\left(A'_0(\xi)\right)^\dagger R_n^{-\frac{1}{2}}Q_n R_n^{-1}\notag\\
    &+R_n^{-1}\Big]BD_{\alpha,n}.
    \end{align}
    Assume \ref{assu:R_nonsigular}, $\forall\,i: 0<\abs{\lambda_i(K)} < 1$ and $\xi >1/(1-\lambda_{\text{max}}(K^2))$. Then, $P(\xi)$ is positive semidefinite and $\text{rank}(P(\xi)) = d$, furthermore for every 
    nonzero eigenvalue of $P(\xi)$, 
    \begin{align}
        \lambda_i(P(\xi)) = \frac{\xi\lambda_i^2(K) - \xi - \lambda_i^2(K) + 1}{\xi\lambda_i^2(K) - \xi + 1},
    \end{align}
    where $K$ is defined in \eqref{equ:def_K}. Note that $\lambda_i(\cdot)$ are ordered, that is $\lambda_1(\cdot) \geq \lambda_2(\cdot) \geq \cdots \geq \lambda_d(\cdot)$.
\end{lemma}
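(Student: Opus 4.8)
The plan is to reduce the $n \times n$ eigenvalue problem for $P(\xi)$ to a $d \times d$ computation and then diagonalize everything in terms of the eigenvalues of $K$. First I would record the structural facts that drive the reduction. Writing $M \defeq R_n^{-1}Q_n R_n^{-\frac{1}{2}}(A'_0(\xi))^\dagger R_n^{-\frac{1}{2}}Q_n R_n^{-1} + R_n^{-1}$, note that $D_{\alpha,n}$ is diagonal with $\pm 1$ entries, hence symmetric and orthogonal, so $P(\xi) = D_{\alpha,n}(B\tr M B)D_{\alpha,n}$ is an orthogonal conjugate of $B\tr M B$ and the two share the same eigenvalues. Since the nonzero eigenvalues of a product are invariant under cyclic rotation, the nonzero eigenvalues of $B\tr M B$ coincide with those of $M B B\tr = M A$, using $BB\tr = A$ from \eqref{equ:BBtr}. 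So the entire problem collapses onto the $d \times d$ matrix $MA$.

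Next I would settle the positive semidefiniteness and rank claims. Under $\forall i: 0 < \abs{\lambda_i(K)} < 1$, the reformulation $A = \Phi_{R,n}\tr(I - K^2)\Phi_{R,n}$ from \eqref{equ:reformulation_A_2} together with the full rank of $\Phi_{R,n}$ (\ref{assu:R_nonsigular}) gives $A \succ 0$. Moreover, from $A'_0(\xi) = V_{\Phi_{R}}U_{\Phi_{R}}\tr(-\tfrac{1}{\xi}I + I - K^2)U_{\Phi_{R}}V_{\Phi_{R}}\tr$ (cf.\ \eqref{equ:def_Ac-coma} and \eqref{equ:A_psi_reform}) and the hypothesis $\xi > 1/(1 - \lambda_{\text{max}}(K^2))$, every eigenvalue $-\tfrac{1}{\xi} + 1 - \lambda_i^2(K)$ is strictly positive, so $A'_0(\xi) \succ 0$ and its pseudoinverse is a genuine inverse. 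Hence $M$ is a sum of two positive definite matrices, $M \succ 0$. Writing $B\tr M B = (M^{1/2}B)\tr(M^{1/2}B)$ then yields $P(\xi) \succeq 0$, and since $BB\tr = A$ is invertible, $B$ has rank $d$, so $\text{rank}(P(\xi)) = \text{rank}(M^{1/2}B) = d$.

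For the eigenvalue formula I would substitute the SVD $\Phi_{R,n} = U_{\Phi_{R}}\Sigma_{\Phi_{R}}V_{\Phi_{R}}\tr$ and express $R_n = V_{\Phi_{R}}\Sigma_{\Phi_{R}}^2 V_{\Phi_{R}}\tr$, $Q_n = V_{\Phi_{R}}\Sigma_{\Phi_{R}}\widetilde{K}\Sigma_{\Phi_{R}}V_{\Phi_{R}}\tr$ and $A = V_{\Phi_{R}}\Sigma_{\Phi_{R}}(I - \widetilde{K}^2)\Sigma_{\Phi_{R}}V_{\Phi_{R}}\tr$, where $\widetilde{K} \defeq U_{\Phi_{R}}\tr K U_{\Phi_{R}}$ shares the eigenvalues of $K$. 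A direct simplification collapses the $\Sigma_{\Phi_{R}}$ factors and gives $M = V_{\Phi_{R}}\Sigma_{\Phi_{R}}^{-1}(\widetilde{K} D_\xi^{-1}\widetilde{K} + I)\Sigma_{\Phi_{R}}^{-1}V_{\Phi_{R}}\tr$ with $D_\xi \defeq \tfrac{\xi-1}{\xi}I - \widetilde{K}^2$, so that $MA = (V_{\Phi_{R}}\Sigma_{\Phi_{R}}^{-1})\,N\,(V_{\Phi_{R}}\Sigma_{\Phi_{R}}^{-1})^{-1}$ is similar to $N \defeq (\widetilde{K} D_\xi^{-1}\widetilde{K} + I)(I - \widetilde{K}^2)$. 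Diagonalizing $\widetilde{K}$ reduces $N$ to the scalar expression $h(s_i) = \big(\tfrac{s_i}{\frac{\xi-1}{\xi}-s_i}+1\big)(1-s_i)$ in $s_i \defeq \lambda_i^2(K)$, which simplifies exactly to the claimed $(\xi s_i - \xi - s_i + 1)/(\xi s_i - \xi + 1)$. Since these $d$ values are all nonzero, they are precisely the $d$ nonzero eigenvalues of $P(\xi)$; and because $s \mapsto h(s)$ is strictly increasing on $(0,1)$, the ordering of the eigenvalues of $P(\xi)$ matches that of $\lambda_i^2(K)$, justifying the index-wise statement.

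The main obstacle I anticipate is the bookkeeping in the third step: keeping the non-commuting $\Sigma_{\Phi_{R}}$ and $\widetilde{K}$ factors in the correct order while verifying that they telescope, and confirming that $(A'_0(\xi))^\dagger$ may be replaced by $(A'_0(\xi))^{-1}$ before the manipulation. The reduction and the PSD/rank arguments are routine once $A \succ 0$ and $M \succ 0$ are in hand; the only subtlety in the final claim is aligning the descending order of $\lambda_i(K)$ used in the statement with the $\lambda_i^2(K)$-ordering that actually governs the eigenvalues of $P(\xi)$, which the strict monotonicity of $h$ resolves.
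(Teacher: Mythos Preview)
Your proposal is correct and reaches the same conclusion as the paper, but by a genuinely different route. The paper works ``from the inside out'': it first rewrites the middle factor $P_1(\xi)$ (your $M$) as $\Phi_{R,n}^{-1}V_K\Lambda'_K(\xi)V_K\tr\Phi_{R,n}\ntr$, then introduces $E \defeq \Phi_{R,n}\ntr B D_{\alpha,n}$, computes $EE\tr = I - K^2$, and uses the full SVD of $E$ to obtain an explicit eigendecomposition of $P(\xi)$, from which positive semidefiniteness, rank, and the eigenvalue formula are read off simultaneously. Your argument instead strips away the $n\times n$ layer first: conjugation by $D_{\alpha,n}$ and the cyclic-invariance trick $B\tr M B \leadsto M B B\tr = MA$ reduce the problem to a $d\times d$ matrix, which you then recognize as similar to a matrix built entirely from commuting functions of $\widetilde{K}$, so the scalar computation is immediate. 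This is shorter and avoids the auxiliary SVD of $E$; in exchange, the paper's construction actually produces the eigenvectors of $P(\xi)$ (via $V_E$), which you do not obtain but which the downstream argument does not need anyway. Your separate treatment of $M\succ 0$ and $\mathrm{rank}(B)=d$ to establish positive semidefiniteness and rank before the eigenvalue computation is also cleaner than the paper's, which extracts these facts only after the full eigendecomposition is in hand. Finally, your remark that $h$ is strictly increasing in $s=\lambda^2(K)$ is a useful observation the paper omits: the index-wise identity in the statement really matches $\lambda_i(P(\xi))$ to the $i$-th largest value of $\lambda^2(K)$, which need not be $\lambda_i^2(K)$ under the stated ordering of $\lambda_i(K)$; fortunately the only use downstream is for $\lambda_{\max}$, where both readings agree.
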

\begin{proof}
    Lets denote the middle part of the product given in the definition of $P(\xi)$ \eqref{equ:P_defined_general} as
    \begin{align}
    P_{1}(\xi) \defeq R_n^{-1}Q_nR_n^{-\frac{1}{2}}\left(A'_0(\xi)\right)^\dagger R_n^{-\frac{1}{2}}Q_nR_n^{-1} + R_n^{-1}.     
    \end{align}
    Using the same reformulations as in \eqref{equ:reformulation_A}-\eqref{equ:lam_lb_end}, namely
    $R^{-1} = \Phi_{R,n}^{-1}\Phi_{R,n}\tri$, $Q_n=\Phi_{R,n}\tr K\Phi_{R,n}$, $R_n^{-1/2}=V_{\Phi_{R}}\Sigma_{\Phi_{R}}^{-1}V_{\Phi_{R}}\tr $ and $\Phi_{R,n} = U_{\Phi_{R}}\Sigma_{\Phi_{R}}V_{\Phi_{R}}\tr$, we have
    \begin{align}
        P_{1}(\xi) &= \Phi_{R,n}^{-1}\Big(K\Phi_{R,n}R_n^{-\frac{1}{2}}\left(A'_0(\xi)\right)^\dagger R_n^{-\frac{1}{2}}\Phi_{R,n}K+ I\Big)\Phi_{R,n}\ntr \notag \\
        &= \Phi_{R,n}^{-1}\Big(KU_{\Phi_{R}}V_{\Phi_{R}}\tr\left(A'_0(\xi)\right)^\dagger V_{\Phi_{R}}U_{\Phi_{R}}\tr K+ I\Big)\Phi_{R,n}\ntr .
    \end{align}
    The matrix $A'_0(\xi)$ can be written as in \eqref{equ:A_psi_reform}
    \begin{align}
        A'_0(\xi)  =  V_{\Phi_{R}}U_{\Phi_{R}}\tr  (-\tfrac{1}{\xi}I + I-K^2) U_{\Phi_{R}}V_{\Phi_{R}}\tr,
    \end{align}
    therefore, 
    \begin{align}
        P_{1}(\xi) = \Phi_{R,n}^{-1}\left(K\left(-\tfrac{1}{\xi}I + I-K^2\right)^\dagger K + I\right)\Phi_{R,n}\ntr.
    \end{align}
    Using the eigendecomposition of $K = V_K\Lambda_KV_K\tr$ it holds that
    \begin{align}\label{equ:P1_reform}
        P_{1}(\xi) &= \Phi_{R,n}^{-1}V_K\left(\Lambda_K\left(I-\Lambda_K^2-\tfrac{1}{\xi}I\right)^\dagger\Lambda_K + I\right)V_K\tr\Phi_{R,n}\ntr\notag\\
        & = \Phi_{R,n}^{-1}V_K\Lambda'_K(\xi)V_K\tr\Phi_{R,n}\ntr,
    \end{align}
    where
    \begin{align}\label{equ:def_Lam_K_coma}
        \Lambda'_K(\xi) \defeq \Lambda_K\left(I-\Lambda_K^2-\tfrac{1}{\xi}I\right)^\dagger\Lambda_K + I.
    \end{align}
    Notice that as we assume $\xi >1/(1-\lambda_{\text{max}}(K^2))$, it holds for every diagonal element (eigenvalue) of $(I-\Lambda_K^2-{I}/{\xi})$ that
    \begin{align}
        1-\lambda_{i}(K^2)-\frac{1}{\xi} &> 1-\lambda_{i}(K^2)-(1-\lambda_{\text{max}}(K^2))\notag\\
        &=\lambda_{\text{max}}(K^2)-\lambda_{i}(K^2) \geq 0,
    \end{align}
    therefore $(I-\Lambda_K^2-{I}/{\xi})$ is positive definite. Since we assume that $\forall\,i: 0<\abs{\lambda_i(K)} < 1$, it follows that $\Lambda'_K(\xi)$ is also positive definite.
    Substituting our reformulation of $P_{1}(\xi)$ from \eqref{equ:P1_reform} back to $P(\xi)$ \eqref{equ:P_defined_general}, we get
    \begin{align}\label{equ:P_reform}
        &P(\xi)=D_{\alpha,n}B\tr \Phi_{R,n}^{-1}V_K\Lambda'_K(\xi)
        V_K\tr\Phi_{R,n}\ntr BD_{\alpha,n}.
    \end{align}
    
    By introducing $E \defeq \Phi_{R,n}\ntr BD_{\alpha,n}$, recalling that $BB\tr = A$ \eqref{equ:BBtr} and using the reformulation of $A$ from \eqref{equ:reformulation_A}-\eqref{equ:reformulation_A_2} with the eigendecomposition of $K$, it can be derived that
    \begin{align}\label{equ:B_tilde}
        EE\tr &= \Phi_{R,n}\ntr BD_{\alpha,n}D_{\alpha,n}B\tr\Phi_{R,n}^{-1} = \Phi_{R,n}\ntr A\Phi_{R,n}^{-1} \notag\\
        &=I-K^2 = V_K(I-\Lambda_K^2)V_K\tr.
    \end{align}
    It follows that the full SVD decomposition of $E$ is
    \begin{align}\label{equ:B_tilde_svd}
        E = \Phi_{R,n}\ntr BD_{\alpha,n} = V_K\big[(I-\Lambda_K^2)^{1/2} \quad 0\big]V_{E}\tr,
    \end{align}
    where in \eqref{equ:B_tilde} and \eqref{equ:B_tilde_svd} we used the same arrangement of singular and eigenvalues. Substituting \eqref{equ:B_tilde_svd} back to \eqref{equ:P_reform} we get the eigendecomposition of $P(\xi)$ as
    \begin{align}
        P(\xi)=&V_{E}\big[(I-\Lambda_K^2)^{1/2} \quad 0\big]\tr\Lambda'_K(\xi)\notag\\
        &\cdot \big[(I-\Lambda_K^2)^{1/2} \quad 0\big]V_{E}\tr,
    \end{align}
    since $V_{E}$ is orthonormal and $[(I-\Lambda_K^2)^{1/2} \quad 0\big]\tr\Lambda'_K(\xi)[(I-\Lambda_K^2)^{1/2} \quad 0\big]$ is diagonal. Recall that $\Lambda'_K(\xi)$ is positive definite and that we assume $\forall\,i: 0<\abs{\lambda_i(K)} < 1$, therefore $P(\xi)$ has exactly $d$ positive eigenvalues, hence $\text{rank}(P(\xi)) = d$. Then, the nonzero eigenvalues of $P(\xi)$ are given as
    \vspace{-1mm}
    \begin{align}
            \lambda_i(P(\xi)) &= \left(\tfrac{\lambda_i^2(K)}{1-\lambda_i^2(K)-\tfrac{1}{\xi}}+1\right)(1-\lambda_i^2(K))\notag\\
            &=\frac{\xi\lambda_i^2(K) - \xi - \lambda_i^2(K) + 1}{\xi\lambda_i^2(K) - \xi + 1},
    \end{align}
    where we used 
    \eqref{equ:def_Lam_K_coma} and that $\Lambda'_K(\xi)$ is invertible.
\end{proof}
 
\bibliographystyle{IEEEtran}
\bibliography{sps}
\end{document}